\newcommand{\blind}{0}
\newtheorem{theorem}{Theorem}[section]
\newtheorem{remark}{Remark}
\newtheorem{proof}{Proof}
\newtheorem{definition}{Definition}
\begin{document}

\def\spacingset#1{\renewcommand{\baselinestretch}%
{#1}\small\normalsize} \spacingset{1}

%%%%%%%%%%%%%%%%%%%%%%%%%%%%%%%%%%%%%%%%%%%%%%%%%%%%%%%%%%%%%%%%%%%%%%%%%%%%%%

\if0\blind
{
  \title{\bf One-dimensional Nonstationary Process Variance Function Estimation}
  \author{Eunice J. Kim 
  \hspace{.2cm}\\
    Department of Mathematics and Statistics, Amherst College\\
    and \\
    Zhengyuan Zhu \\
    Department of Statistics, Iowa State University}
  \maketitle
} \fi

\if1\blind
{
  \bigskip
  \bigskip
  \bigskip
  \begin{center}
    {\LARGE\bf Title}
\end{center}
  \medskip
} \fi

\bigskip
\begin{abstract}
Many spatial processes exhibit nonstationary features. We estimate a variance function from a single process observation where the errors are nonstationary and correlated. We propose a difference-based approach for a one-dimensional nonstationary process and develop a bandwidth selection method for smoothing, taking into account the correlation in the errors. The estimation results are compared to that of a local-likelihood approach proposed by Anderes and Stein(2011).  A simulation study shows that our method has a smaller integrated MSE, easily fixes the boundary bias problem, and requires far less computing time than the likelihood-based method.
\end{abstract}

\noindent%
{\it Keywords:}  difference-based; correlated errors
%\vfill
%\hfill {\tiny technometrics tex template (do not remove)}

\newpage
\spacingset{1.45} % DON'T change the spacing!
\section{Introduction}
\label{sec:introduction}

The prevalence of mobile devices and increase in storage capacity have brought about high demand for spatial data analysis. 
Many spatial processes exhibit nonstationary features, such as non-constant mean and variance and changing structure of autocorrelation. We encounter these features in data from ecology, geology, meteorology, astronomy, and economics to name a few. Specific examples include processes describing natural phenomena, such as species and mineral dispersal, wind fields, crop yields, and the cosmic microwave background, 
as well as human activities in aggregate, such as geolocated Internet search queries, real estate prices, and air pollution. 
When analyzing these data, it is not only important to estimate the overall trend in the process but also useful to construct reasonable interval estimates of the mean process and provide spatial prediction intervals. 

In this paper, we are interested in estimating the variance function of a one-dimensional spatial process where the mean and the variance functions are smooth and have additive correlated errors. We assume a fixed equidistant grid design for a one-dimensional process and consider a mixed domain asymptotic framework to develop the theory for a variance function estimator. \cite{Brown:2007ij} have discussed asymptotic properties of nonparametric variance estimators formed by differencing. This article extends the scenario to nonstationary correlated error processes and discusses cross-validation for bandwidth selection. Our estimator requires estimating the correlation structure embedded in the data and adjusting the scale of the difference-based estimator using the estimated correlation. We describe the asymptotic properties of our estimator and evaluate its performance using a simulation study. %, and apply to measure the constancy of fluid flow. 

Section~\ref{sec:related} discusses prior work on variance estimation for nonstationary processes. 
Section~\ref{sec:model} defines a data model and local variogram as a product of a variance function and a standard variogram function. 
Section~\ref{sec:theoretical}  looks into the estimator of local variogram function and its theoretical properties, and Section~\ref{sec:algorithm} presents the algorithm for variance function estimation. Section~\ref{sec:simulation} evaluates the method through a simulation study, and Section~\ref{sec:conclusion} discusses the advantages of the difference-based variance function estimator comparing it to a likelihood-based estimator and closes with future work. 

\section{Related Work}
\label{sec:related}
%Variance function estimation is important in engineering, earth sciences, and climate to detect change in processes and to provide precise prediction intervals. Two common approaches are the likelihood-based method and the method-of-moments. 

\cite{Neumann:1941fk} proposed using differences of successive observations to estimate the variance of independent and identically distributed ($i.i.d.$) errors. \cite{Seifert:1993oq} and \cite{Wang:2008} explored that the bias in the variance estimation was reduced via differencing as it cancels out a mean or a gradually changing mean function.   \cite{Gasser:1986ly} proposed second-order differencing to estimate a variance function incorporating irregularly-spaced observations, and \cite{Hall:1990zr, Hall:1991ai} used a differencing approach to estimate the variance of two-dimensional processes with $i.i.d.$ errors in image processing. All these work assume independence among the errors.

We propose a difference-based variance function estimator for one-dimensional processes where the errors are nonstationary and correlated. Under the same assumption of data model, \cite{Anderes:2011nx} proposed a likelihood-based method to estimate a smooth variance function and a correlation function. Their method can handle irregularly spaced data and provides statistical efficiency when the Gaussianity assumption is met for the observed process.  Still, the computational burden is heavy especially when selecting bandwidth as covariance matrices must be inverted for every iteration of simulation. Gaussian distributional assumption could also be stringent for the observed nonstationary processes.

\cite{Hall:1989ys} discussed the asymptotic risk of the difference-based variance function estimator in nonparametric regression depending on the smoothness of a variance function relative to the smoothness of a mean function. \cite{Brown:2007ij} re-examined the asymptotic properties of difference-based variance estimators of non-constant and independent errors, and
%, as the asymptotic risk of the variance function estimator depends not only on the variance function itself but also on the mean function. 
 \cite{Wang:2008} derived the asymptotic minimax risk rate in terms of the the degree differentiability of the mean and variance function, $\alpha$ and  $\beta$ respectively.  When $\alpha$ is greater than or equal to $1/4$, the convergence rate of risk is $O(n^{-\beta/(2\beta+1)})$, the same as in nonparametric regression with $i.i.d.$ errors; and when $\alpha$ is less than $1/4$, then the risk is $O(n^{-4\alpha})$, still slightly larger than $O(n^{-1})$. In Section ~\ref{sec:theoretical} we describe the asymptotic results of the correlated error scenario and compare it to the independent error scenario.

 In signal processing, a band-pass filter provides local variance estimation assuming that the marginal mean function is changing slowly. The method works well under second-order stationarity of the error process but not under nonstationarity. Since the shape of a filter and the passband interact with the underlying data process, the estimation result gets distorted systematically. Also, converting the output from frequency domain may introduce bias. Therefore, working in the time-domain for the output in the same domain should result in more accurate estimation. %the difference-based smoothing (in the time-domain) gives more precise and accurate estimation of the variance function.

\section{Data Model and Definition}
\label{sec:model}
We define a continuously differentiable Lipschitz function, a nonstationary data model, and local variogram. Estimating a variance function using a nonparametric approach, the Lipschitz condition on the mean and variance functions of the data provides the basis of minimum smoothness.
\begin{definition}
\label{def:lip}
Let $c_{1},c_{2}>0$. Denote $q'\doteq q-\lfloor q\rfloor$  where $\lfloor q\rfloor$  is the largest integer less than $q$. We say that the function $f(x)$ is in class of $\Lambda_{q}(c_f)$  if for all $x,y\in(0,1)$ , $\left|f^{(\lfloor q\rfloor)}(x)-f^{(\lfloor q\rfloor)}(y)\right|\leq c_{1}\left|x-y\right|^{q'}$, $\left|f^{(k)}(x)\right|\leq c_{2}$  for $k=0,\dots,\lfloor q\rfloor$, and $c_f=\max(c_1, c_2)$. 
\end{definition}
\begin{definition}
\label{def:lipplus}
If a function $f(x)$  is in class $\Lambda_{q}(c_f)$  and there exists $\delta>0$  such that $f(x)>\delta$  for all $x\in[0,1]$, we say the function is in $\Lambda_{q}^{+}(c_f)$.
\end{definition}

Consider a nonstationary continuous process model 
\begin{equation}
\label{eq:data-model} Z(s) = \mu(s)+\sigma(s)X(s) 
\end{equation} 
on $0\leq s \leq1$, without loss of generality,  with a smooth mean function $\mu(s)$ and an additive, correlated noise as a product of a smooth standard deviation function $\sigma(s)$ and a second-order stationary process $\{X(s)\}$  where $E(X(s))=0,$ $var(X(s))=1$, and $cov(X(s),X(s'))=\rho(\left|s-s'\right|; \theta)$ for all pairs of $s$ and $s'$ in the unit interval. We consider $\mu(s)\in \Lambda_{q}(c_f)$,  $q \geq 0$, and $\sigma^{2}(s)\in \Lambda_{\beta}^{+}(c_f)$ , $\beta \geq 2$.  We assume the following general form of a correlation function:
\begin{align}
\rho(\left|s-s'\right|;\theta)=\begin{cases}
1 & s = s' \\
1- \dfrac{ \left|s-s'\right|^{\alpha}}{\theta}+O(\left|s-s'\right|^{\alpha+2}) & s \neq s'
\end{cases} 
\label{eq:rho}
\end{align}
where $\theta>0$  and $0<\alpha<2$ for a valid correlation structure. This class (\ref{eq:rho}) of correlation function encompasses linear, spherical, Mat\'ern and exponential models. We assume an equally spaced design and define $s_{i} =  (2i-1)/(2n)$  where the location is indexed by $i=1,\dots,n$. As a shorthand, we write $Z_{i} =  Z\left(s_{i}\right),$ $ \mu_{i}=\mu\left(s_{i}\right),$ $\sigma_{i}=\sigma\left(s_{i}\right), \rho_{h} = \rho\left(h/n \right)$, and to specify a parametric correlation function, the shorthand is $\rho_{s; \theta}= \rho\left(s;\theta\right)$. We use the following notation to denote the $j$\textsuperscript{th}-order derivative of a function $\sigma^2(s)$: $\sigma^{2(j)}(s)=\left. d^{j} \sigma^{2}(x) / dx^{j} \right|_{x=s} $. 

We expand the definition of \textit{variogram}, introduced by \cite{Matheron:1962}, since differencing nonstationary process data requires local treatment.  Using a 0-mean nonstationary process as a data model from (\ref{eq:data-model}), the variance at $s$ of a lag-$h$ first-order differenced process is 
\begin{align}
\label{eq:deriving-locvariog}
 var&\left(Z\left(s-\frac{h}{2n}\right)-Z\left(s+\frac{h}{2n}\right)\right) \nonumber \\
= & 2\sigma^{2}(s) \left(1- \rho_{h} \right) + 2 \left(\sigma^{(1)}(s)\right)^2  \left(1+ \rho_{h} \right) \left(\frac{h}{2n}\right)^2 +  o\left(n^{-2}\right).  
\end{align}
%See equation (\ref{eq:variogram-full}) in Appendix for details.
The first term resembles the definition of a variogram but with local variance, and the derivatives of the smoothly changing local variance and other higher order terms follow.
\begin{definition}
The local variogram $2\gamma_{L}\left(s,h;\theta\right)$  is defined as the leading term of (\ref{eq:deriving-locvariog}), i.e. 
\begin{equation}
\label{eq:Def-locvariog}
\gamma_{L}(s,h;\theta) = \sigma^{2}(s)\left(1-\rho\left(\frac{h}{n};\theta\right)\right).
 \end{equation} 
\end{definition}
Local variogram (\ref{eq:Def-locvariog}) is a product of a variance function and the variogram of a standardized process. Variogram represents spatial dispersion by taking lagged differences of a stationary process, and local variogram describes spatial dispersion of a nonstationary process. When the lag size $h$ is  small in comparison to the number of observed points $n$ in a process, that is in mixed-domain asymptotic,  the higher order terms vanish. 

%
%We define a squared order $l$ difference process at lag-$h$ as
%\[ \{ D_{i,h}^2 \}_{i=1}^{n-hl}= \frac{\{(\sum_{j=0}^{l} c_{j} Z(s_{i+jh}))^2\}_{i=1}^{n-hl} }{\sum_{j=0}^{l}{c_j}^2}\]
%where $c_{j}$'s are the binomial expansion coefficients. Let  $d_{j}=c_{j}/\sqrt{\sum_{j=0}^{l}{c_j}^2}$.
%Then for any positive integer $l$, $\sum_{j=0}^{l}d_{j}=0$ and $\sum_{j=0}^{l}d_{j}^{2}=1$ since $\sum_{j=0}^{j=l}{c_j}=0$. If $\{Z_i\}$ is an independent and identically distributed error process with mean 0, this implies the sequence of $\{D_{i,h}\}$ is an identically distributed error process with $E(D_{i,h})=0$ and $var(D_{i,h})=E(D_{i,h}^{2})=1$ and correlated. Because of these properties, \cite{Brown:2007ij} refer to $D_{i,h}$ as pseudo-residuals. we first need to estimate the local variogram at a fixed autocorrelation lag $h$. 

%%%%%%%%%%%%%%
\section{Theoretical Results}
\label{sec:theoretical}

We first define an estimator for local variogram in Section ~\ref{sec:lvariog} as a preliminary steps to estimating a variance function. The bias and the variance of the estimator are derived in Section ~\ref{sec:Bias} and ~\ref{sec:Var} respectively. In Section 4.4 the asymptotic convergence rate of the point-wise mean square error of the local variogram estimator is compared to that of a standard nonparametric estimator with i.i.d. errors.

%We use Gasser-M\"uller kernel, as \cite{Wang:2008} did, in smoothing local information since the kernel simplifies low-order calculations and easily removes the edge effect.
% $\{D_{i,h}^{2}\}=\{\left(Z_{i}-Z_{i+h}\right)^{2}/2\} _{i=1}^{n-h}$, 

\subsection{Local variogram estimator}
\label{sec:lvariog}
If $\{Z_i\}$ is an independent and identically distributed error process with mean 0 and variance 1, this implies that taking a simple differencing by lag-$h$ of an equally-spaced process renders a correlated sequence $\{D_{i,h}\}_{i=1}^{n-h} =\{\left(Z_{i}-Z_{i+h}\right)/\sqrt{2}\} _{i=1}^{n-h}$ where $E(D_{i,h})=0$ and $var(D_{i,h})=1$ for $i=1,\dots, n-h$ and any  positive integer $h < n$. \cite{Brown:2007ij} refer to the sequence $\{D_{i,h}\}_{i=1}^{n-h}$ as pseudo-residuals. The squared pseudo-residuals  at different lags resemble a variogram since $var(D_{i,h}) = E(D_{i,h}^2)$ by the 0-mean property. We derive an estimator of local variogram from the differenced process since the $\{Z_i\}$ we consider is nonstationary as in (\ref{eq:data-model}).

Let  $K_{\lambda}$ represent a Gasser-M\"uller kernel with bandwidth $\lambda$ and restrict its order to be greater than $\beta$, the degree differentiability of variance function in $\{Z_i\}$. \cite{Gasser:1985} have developed the kernel so that the moment conditions simplify the calculation of high-order terms in nonparametric estimators and that the edge effect be easily removed by adjusting the kernels at the boundaries of the domain.
 Without loss of generality, we assume that the process is observed on a unit interval, $[0,1]$. We normalize the simple differenced process by scaling it by $1/\sqrt{2}$ so that $var(D_{i,h})$ is in the same order as $var(Z_i)$, and we match the $i$\textsuperscript{th} pseudo-residual at the center of observed pair location, that is $(s_i+s_{i+h})/2 = (i + h/2)/n$.

We define Gasser-M\"uller kernel estimator of local variogram at location $s$ and lag $h$ as 
\begin{equation}
\label{eq:variog-est} 
\hat{\gamma}_{L\:\lambda}(s,h)=\sum_{i=1}^{n-h}K_{\lambda,i+h/2}(s)D_{i,h}^{2}.
\end{equation} 
%
%\begin{remark}
 Note that in the local variogram estimator (\ref{eq:variog-est}) the $i^{th}$ squared difference $D^2_{i,h}$ is associated with the kernel weight indexed by $i+h/2$. The shift in the index by $h/2$ is to align the portion of kernel weight to the center of the observed pair in constructing $D^2_{i,h}$. When $h=1$, for example, the limits of the kernel are at $s_{i}$  and $s_{i+1}$ corresponding to the locations of observations  $Z(s_{i})$ and $Z(s_{i+1})$, which form the $i$\textsuperscript{th} pseudo-residual. If the kernel weights were indexed by $i$ instead of $i+h/2$, then the lower and upper integration limits will be  $(s_{i-1}+s_{i})/2$ and $(s_{i}+s_{i+1})/2$ respectively, whose locations are translated by -$h/2$.
%\end{remark}

It is possible to consider higher-order differencing to form pseudo-residuals, but the first-order differencing introduces the least bias and variance in local variogram estimation with the least number of correlated terms. We also suggest using the smallest lag in differencing because it reduces correlation in the newly constructed sequence of pseudo-residuals and helps with the estimation of embedded correlation.

% Local polynomial regression with an odd degree kernel is a common solution to remove the ``boundary effect."  provides the following asymmetric $m$-order boundary kernel function that removes this effect. 

\subsection{Bias of the estimator}
\label{sec:Bias}
Let $D_{i,h}=\left(Z_{i}-Z_{i+h}\right)/\sqrt{2}$, $\delta_{i,h}  =  \mu_{i}-\mu_{i+h}$, and $g_{i,h}= \sigma_{i}^{2}+\sigma_{i+h}^{2}-2\sigma_{i}\sigma_{i+h}\rho_{h}$ for $i=1,\dots,n-h$. The expected value of the local variogram estimator is 
\begin{eqnarray*}
 E\left(\hat{\gamma}_{L \, \lambda}(s,h)\right) & = &\sum_{i=1}^{n-h}K_{\lambda,i+\frac{h}{2}}(s)E\left(D_{i,h}^{2}\right)\\
  & = & \frac{1}{2}\sum_{i=1}^{n-h}K_{\lambda,i+\frac{h}{2}}(s)\left\{(\mu_{i}-\mu_{i+h})^{2} + \sigma_{i}^{2}+\sigma_{i+h}^{2}-2\sigma_{i}\sigma_{i+h}\rho_{h}\right\}.
 \end{eqnarray*} 
The bias of the local variogram estimator is
\begin{align}
bias(\hat{\gamma}_{\lambda}(s,h)) & =E(\hat{\gamma}_{\lambda}(s,h))-(1-\rho_{h})\sigma^{2}(s)\nonumber \\
& = \sum_{i=1}^{n-h}K_{\lambda,i+\frac{h}{2}}(s)\left\{\frac{1}{2}(\delta_{i,h}^{2}+g_{i,h})-(1-\rho_{h})\sigma^{2}(s) \right\}.
\label{eq:bias0}
\end{align}
Note that $(1-\rho_{h})=O(n^{-\alpha})$ and $0<\alpha < 2$. 
\begin{theorem}
\label{theorem1}
Assume a nonstationary data model (\ref{eq:data-model}) and the correlation function (\ref{eq:rho}). The mean and the variance functions $\mu(s)$ and $\sigma^{2}(s)$ are continuously differentiable Lipschitz functions (see Definitions \ref{def:lip} and \ref{def:lipplus}) where $\mu(s)\in \Lambda_{q}(c_f), q \geq 0$  and $\sigma^{2}(s)\in \Lambda_{\beta}^{+}(c_f), \beta \geq 2$. The difference-based local variogram $m$-order Gasser-M\"uller kernel estimator (\ref{eq:variog-est}) at location $s$ and lag $h$ has an asymptotic bias of order 
\begin{equation}
\label{eq:bias1}
bias(\hat{\gamma}_{\lambda}(s,h)) = 
\begin{cases} O(n^{-2} +  n^{-2q} + n^{-\alpha-1})  &  \mbox{ where } \quad q, \beta < m\\
O(n^{-2} +  n^{-2q} + n^{-\alpha-1})+ O(n^{-\alpha}\lambda^{m})  &  \mbox{ where } \quad q < m \leq \beta\\
O(n^{-2} + n^{-2q} + n^{-\alpha-1}) + O(\lambda^{ m}) & \mbox{ where } \quad m \leq q.
\end{cases}
\end{equation}
\end{theorem}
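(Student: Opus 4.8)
I would start from the exact bias formula (\ref{eq:bias0}) and split it into three pieces handled by different estimates. Completing the square in $g_{i,h}$,
\[
g_{i,h}=\sigma_i^2+\sigma_{i+h}^2-2\sigma_i\sigma_{i+h}\rho_h=(\sigma_i-\sigma_{i+h})^2+2\,\sigma_i\sigma_{i+h}(1-\rho_h),
\]
and using the Gasser--M\"uller normalization $\sum_{i=1}^{n-h}K_{\lambda,i+\frac h2}(s)=1$ (exact for interior $s$ and preserved at the endpoints by the boundary-modified kernels), the bias becomes
\[
bias(\hat\gamma_{\lambda}(s,h))
= \tfrac12\sum_i K_{\lambda,i+\frac h2}(s)\,\delta_{i,h}^2
+ \tfrac12\sum_i K_{\lambda,i+\frac h2}(s)(\sigma_i-\sigma_{i+h})^2
+ (1-\rho_h)\sum_i K_{\lambda,i+\frac h2}(s)\bigl(\sigma_i\sigma_{i+h}-\sigma^2(s)\bigr),
\]
which I abbreviate $B_\mu+B_\sigma+B_\rho$. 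The first two pieces are purely ``local'' (no bandwidth enters) and the three regimes of $m$ in (\ref{eq:bias1}) enter only through $B_\rho$, together with a refinement of $B_\mu$ when $\mu$ is smoother than the kernel.

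For $B_\mu$: Definition~\ref{def:lip} applied to $\mu\in\Lambda_q(c_f)$ gives $|\delta_{i,h}|=|\mu(s_i)-\mu(s_{i+h})|\le c_f(h/n)^{\min(q,1)}$ (the H\"older bound when $q<1$, the mean-value bound when $q\ge1$), so $\delta_{i,h}^2=O(n^{-2\min(q,1)})$ uniformly in $i$; since the kernel weights are nonnegative and sum to one, $B_\mu=O(n^{-2}+n^{-2q})$. When $m\le q$ I would sharpen this by expanding $\delta_{i,h}$ about the pair midpoint and recognizing the resulting $\sum_i K_{\lambda,i+\frac h2}(s)(\mu^{(1)})^2$-type sum as an order-$m$ Gasser--M\"uller smooth of a smooth function, whose residual is the source of the $O(\lambda^m)$ term in the third case of (\ref{eq:bias1}). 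For $B_\sigma$: since $\sigma^2\in\Lambda_\beta^+(c_f)$ with $\beta\ge2$ is bounded away from zero, $\sigma=\sqrt{\sigma^2}$ is continuously differentiable with bounded derivative, so $|\sigma_i-\sigma_{i+h}|=O(n^{-1})$ and $B_\sigma=O(n^{-2})$.

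The heart of the argument is $B_\rho$. First, (\ref{eq:rho}) gives $1-\rho_h=(h/n)^{\alpha}/\theta+O((h/n)^{\alpha+2})=O(n^{-\alpha})$, so it remains to bound the bracketed sum. I would re-center the product at the pair midpoint $t_i=(s_i+s_{i+h})/2$: writing $\sigma_i\sigma_{i+h}=\sigma(t_i-\tfrac h{2n})\sigma(t_i+\tfrac h{2n})$ and Taylor-expanding, the odd-order cross terms cancel, so $\sigma_i\sigma_{i+h}=\sigma^2(t_i)+O(n^{-2})$. The index shift by $h/2$ (discussed after (\ref{eq:variog-est})) is exactly what makes $K_{\lambda,i+\frac h2}(s)$ the integral of the rescaled kernel over the tiling interval centered at $t_i$, so $\sum_i K_{\lambda,i+\frac h2}(s)\sigma^2(t_i)$ is a genuine order-$m$ Gasser--M\"uller estimate of $\sigma^2$ at $s$. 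The standard Gasser--M\"uller bias expansion then applies: the moment conditions of the order-$m$ kernel annihilate the Taylor terms of $\sigma^2$ up to degree $\min(m-1,\lfloor\beta\rfloor)$, so $\sum_i K_{\lambda,i+\frac h2}(s)\sigma^2(t_i)=\sigma^2(s)+(\text{kernel-bias residual})+(\text{quadrature error }O(n^{-1}))$ on the equispaced grid. Multiplying back by $1-\rho_h=O(n^{-\alpha})$ turns the quadrature error into the $O(n^{-\alpha-1})$ term common to all three cases and the kernel-bias residual into the $O(n^{-\alpha}\lambda^m)$ term of the second case (where $m\le\beta$ and the order-$m$ moment term survives), the $O(n^{-2})$ centering correction contributing only $O(n^{-\alpha-2})$. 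Collecting $B_\mu$, $B_\sigma$, $B_\rho$ and tracking which of $n^{-2}$, $n^{-2q}$, $n^{-\alpha-1}$, $n^{-\alpha}\lambda^m$, $\lambda^m$ dominates in each ordering of $m$ relative to $q$ and $\beta$ yields (\ref{eq:bias1}).

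I expect the main obstacle to be the bookkeeping in $B_\rho$: running the Gasser--M\"uller bias expansion simultaneously in the three small parameters $\lambda$, $n^{-1}$, and $1-\rho_h=O(n^{-\alpha})$, and doing so uniformly up to the endpoints, where the $h/2$ shift truncates the design by $\sim h/n$ and the boundary-modified Gasser--M\"uller kernels must be used — this uniformity is also what underlies the paper's claim that the method fixes the boundary bias cheaply. A secondary, routine point is verifying that $\sigma=\sqrt{\sigma^2}$, and hence the product $\sigma(\cdot-\tfrac h{2n})\sigma(\cdot+\tfrac h{2n})$, lies in $\Lambda_\beta$ with a constant controlled by $c_f$ and the lower bound on $\sigma^2$, so that the Gasser--M\"uller expansion is applicable to it.
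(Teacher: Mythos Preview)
Your proposal is correct and follows essentially the same strategy as the paper's proof: split the bias (\ref{eq:bias0}) into a mean-function contribution ($\delta_{i,h}^2$) and a variance-function contribution ($\tfrac12 g_{i,h}-\sigma^2(s)(1-\rho_h)$), Taylor-expand each piece, and invoke the order-$m$ Gasser--M\"uller moment conditions to kill the low-order polynomial terms. Your algebraic split $g_{i,h}=(\sigma_i-\sigma_{i+h})^2+2\sigma_i\sigma_{i+h}(1-\rho_h)$ and subsequent re-centering of $\sigma_i\sigma_{i+h}$ at the pair midpoint $t_i$ is a tidy repackaging of what the paper accomplishes via the direct expansion of $g_{i,h}$ about $s$ in (\ref{eq:g-expan}), but the underlying mechanisms and the resulting orders are identical.
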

\begin{proof}
To calculate an asymptotic bias we split (\ref{eq:bias0}) into two parts. 
The first term is $\delta_{i,h}^{2}$ whose expansion is in (\ref{eq:del-expan}) for $q\geq 1$ and in (\ref{eq:del-expanq}) for $0 \leq q < 1$.  Convolved with a Gasser-M\"uller kernel of order $m$ (\cite{Gasser:1985}), the higher order terms in $\delta_{i,h}^2$ cancel when the number of derivatives of the mean function $q \leq m$:
\begin{align}
\sum_{i=1}^{n-h} K_{\lambda,i+\frac{h}{2}}(s)\delta_{i,h}^{2} 
&= \begin{cases} O(n^{-2}) + O \left(n^{-2q}\right)  &  \mbox{ where } \quad q < m\\
O(n^{-2}) + O \left(n^{-2q}\right)+O(\lambda^{m}) & \mbox{ where } \quad q \geq m.
\end{cases}
\label{eq: b-part1}
\end{align}
The second part of the bias is $\frac{1}{2}g_{i,h} - \sigma^{2}(s)(1-\rho_{h})$. In equation (\ref{eq:g-expan}), the leading term in the expansion of $g_{i,h}$ about $s$  is the local variogram $\sigma^{2}(s)(1-\rho_{h})$. Applying a Gasser-M\"uller kernel to the remaining high order terms in (\ref{eq:g-expan}), we get the following:
\begin{align}
&\sum_{i=1}^{n-h}K_{\lambda,i+\frac{h}{2}}(s) \left\{ \frac{1}{2}g_{i,h}  - \sigma^{2}(s)(1-\rho_{h})  \right\} \nonumber \\
& = \sum_{i=1}^{n-h}K_{\lambda,i+\frac{h}{2}}(s) \left\{ (1-\rho_{h})\left(\sigma_{s}\sigma_{s}^{(1)}\frac{h}{n} +  \frac{\sigma_{s}\sigma_{s}^{(2)} }{2}\frac {h^2}{n^2} \right) +  \frac{1}{2}\left(\sigma_{s}^{(1)} \frac {h}{n}  \right)^{2} \right\} \nonumber \\
& \quad + \sum_{i=1}^{n-h}K_{\lambda,i+\frac{h}{2}}(s) (1-\rho_{h})\sum_{j=1}^{\lfloor\beta\rfloor}  \left\{ \frac{ \left(\sigma^{2}_{s}\right)^{(j)}}{j!} +\frac{ \left(\sigma^{2}_{s}\right)^{(j+1)}}{2(j+1)!} \left(1 + \frac{h}{n} \right)\frac {h}{n} \right\} (s_{i}-s)^{j}  \nonumber \\
& \quad  + \sum_{i=1}^{n-h}K_{\lambda,i+\frac{h}{2}}(s) \frac{h^2}{2n^2}  \sum_{k=1}^{\lfloor\beta\rfloor} \sum_{j=1}^{k+1} c_{k}\sigma^{(j)}_{s}\sigma^{(k-j+2)}_{s} (s_{i}-s)^{k}  + \sum_{i=1}^{n-h}K_{\lambda,i+\frac{h}{2}}(s)O(|s_{i}-s|^{\beta}) \nonumber \\
& \quad = \begin{cases} 
O\left(n^{-\alpha-1}\right) + O\left(n^{-2}\right)  &  \mbox{ where } \beta < m  \\
O\left(n^{-\alpha-1}\right) + O\left(n^{-2}\right) +O(n^{-\alpha}\lambda^m)   &   \mbox{ where } \beta \geq m.
\end{cases}
\label{eq: b-part2}
\end{align}
The bias in the local variogram is summarized in (\ref {eq:bias1}) combining the results in (\ref{eq: b-part1}) and (\ref{eq: b-part2}).
%\hfill\ensuremath{\square}
\end{proof}
As we see from the asymptotic results summarized in (\ref{eq:bias1}), the bias has an order dependent on the differentiability of the mean and the variance functions $q$ and $\beta$ respectively, the order $m$ of the kernel, and the smoothness of the process defined by $\alpha$. As $\alpha$ approaches 0, the data process is rough and work almost as an independent process. As $\alpha$ approaches 2, the smoothness increases and follows the form of a process with a Gaussian correlation function. Also note that when $\alpha=1$, the smoothness is equivalent to a process generated with an exponential correlation function, which we use in the simulation study.%These two settings of (A.i) and (A.ii) translate to the case of a rough error process with a less smooth mean process, while (A.iii) suggests a smooth process where the mean function has  at least a linear trend. 

\begin{remark}
We detail Theorem \ref{theorem1} in the order we listed the results in in (\ref{eq:bias1}).
\begin{itemize}
\item [A.] Assume that $m > q$ and $m > \beta$, in other words the order of kernel is greater than the degree differentiability of both the mean and variance functions.  Then, (A.i) when $\alpha < 1$ and $\frac{\alpha + 1}{2} < q \leq 1$,  the bias is $O\left(n^{-\alpha-1}\right)$; (A.ii) when $\alpha < 1$ and $2q \leq \alpha + 1/2$,  the bias is $O\left(n^{-2q}\right)$; and (A.iii) when $\alpha \geq 1$ and $q \geq 1$,  the bias is $O\left(n^{-2}\right)$.

\item [B.] Assume that $q < m \leq \beta $ and that $\lambda=O(n^{-x})$ where $0<x<1$. Then $O(n^{-\alpha}\lambda^{m})$ is the order of bias in the following three settings: (B.i) $ q \geq 1$, $\alpha \leq 1$, and $x < 1/m$; (B.ii) $q \geq 1$, $\alpha \geq 1$, and $x < (2-\alpha)/m$; and (B.iii) $\alpha < 1$, $2q < \alpha+1$, and $x < (2q - \alpha)/m $. The remaining scenarios should should refer to Case A.

\item [C.] Assume that $m \leq q$ irrespective of $\beta$ and that $\lambda=O(n^{-x})$ where $0<x<1$.  Then the bias is $O(\lambda^{m})$ in the following three settings: (C.i) $q\geq1$, $\alpha \geq 1$, and $2/m > x$; (C.ii) $q < \min(1, \frac{\alpha +1}{2})$, and $x < 2q/m$; (C.iii) $\alpha <1$, $\alpha+1<2q$ and $x < (\alpha + 1)/m$. The remaining scenarios should refer to Case A. 
\end{itemize}
\end{remark}

When $m$ is greater than $q$ and  $\beta$, which is the case of A, the asymptotic bias is the smallest. Therefore, we recommend choosing a high order kernel function even though we do not know $q$ and $\beta$ in practice. Following the recommendation, it will be very unlikely to encounter the bias in $O(\lambda^{m})$ of case C.

\subsection{Variance of the estimator}
\label{sec:Var}
The variance of the local variogram estimator at location $s$  and lag $h$  is 
\begin{equation}
\label{eq:var-raw}
var(\hat{\gamma}_{\lambda}(s,h))=\sum_{i=1}^{n-h}\sum_{j=1}^{n-h}K_{\lambda,i+\frac{h}{2}}(s)K_{\lambda,j+\frac{h}{2}}(s) cov(D_{i,h}^{2},D_{j,h}^{2}). 
\end{equation}
Recall $D_{i,h}=\left(\delta_{i}+\sigma_{i}X_{i}-\sigma_{i+h}X_{i+h}\right)/\sqrt{2}$ where $X_{i}$ is a stationary process with mean 0, variance 1, and a correlation function $cov(X_{i},X_{i+h}) = \rho_{h}$. Let $\{X_i\}_{i=1}^{n}$  be a Gaussian process. Then $\left(\sigma_{i}X_{i}-\sigma_{i+h}X_{i+h}\right)$ is distributed $Normal\left(0,g_{i,h}\right)$ and its fourth moment is $E\left(\sigma_{i}X_{i}-\sigma_{i+h}X_{i+h}\right)^{4}=3g_{i,h}^{2}$. The variance of the squared pseudo-residual is %For simplicity in notations, we use $g_{i}$ for $g_{i,h}$ and $\delta_{i}$ in place of $\delta_{i,h}$. Then,
\begin{eqnarray*}
var(D_{i,h}^{2}) 
&=&E(D_{i,h}^{4})-E^2(D_{i,h}^{2})\\
&=&\frac{1}{4}\left\{ \delta_{i,h}^{4}+6\delta_{i,h}^{2}g_{i,h}+3g_{i,h}^{2}-\left(\delta_{i,h}^{2}+g_{i,h}\right)^{2}\right\} \\
&=&\delta_{i,h}^{2}g_{i,h}+\frac{1}{2}g_{i,h}^{2}.
\end{eqnarray*}
The covariance between the $i$\textsuperscript{th} and the $j$\textsuperscript{th} squared differences is
{\allowdisplaybreaks 
\begin{align*}
cov&(D_{i,h}^{2},D_{j,h}^{2})\\
=&\frac{1}{4}\left\{E(\left(Z_{i}-Z_{i+h}\right)^2\left(Z_{j}-Z_{j+h}\right)^2)-(\delta_{i,h}^{2}+g_{i,h})(\delta_{j,h}^{2}+g_{j,h})\right\}\\
=& \delta_{i,h}\delta_{j,h}\{\rho_{|i-j|}(\sigma_{i}\sigma_{j}+\sigma_{i+h}\sigma_{j+h}) - \rho_{|i-j-h|}\sigma_{i}\sigma_{j+h} - \rho_{|i-j+h|}\sigma_{i+h}\sigma_{j}\}\\
 &+  \frac{1}{2}\{(\rho_{|i-j|}\sigma_{i}\sigma_{j} - \rho_{|i-j-h|}\sigma_{i}\sigma_{j+h})^{2} + 
\left( \rho_{|i-j+h|}\sigma_{i+h}\sigma_{j} - \rho_{|i-j|}\sigma_{i+h}\sigma_{j+h}\right)^{2}\}\\
& + (\rho_{|i-j|}^{2}+\rho_{|i-j-h|}\rho_{|i-j+h|})\sigma_{i}\sigma_{i+h}\sigma_{j}\sigma_{j+h}
-\rho_{|i-j|}\sigma_{i}\sigma_{i+h}(\rho_{|i-j+h|}\sigma_{j}^{2} + \rho_{|i-j-h|}\sigma_{j+h}^{2})\\
= &\delta_{i,h}\delta_{j,h}P_{ij}+\frac{1}{2}P_{ij}^{2}.
\end{align*}}
\noindent where $P_{ij}= \rho_{|i-j|}\left(\sigma_{i}\sigma_{j}+\sigma_{i+h}\sigma_{j+h}\right) - \rho_{|i-j-h|}\sigma_{i}\sigma_{j+h} - \rho_{|i-j+h|}\sigma_{i+h}\sigma_{j}$ for $i \neq j$. Note that when $i = j$, $P_{ii} = g_{i,h}$. 
The Taylor expansion of $P_{i,j}$ about $s_i$ for any $i\neq j$ is
\begin{equation}
\label{eq:Pij}
P_{ij} = \frac{h^2}{n^2} \left( \sigma_i^{(1)}\right)^2 - \frac{2h^2}{(n\theta)^2} \sigma_i^2 + o\left(n^{-3} \right).
\end{equation}
%See Appendix A (\ref{eq:Pij-deriv}) for the derivation. 

%%%%%%%%%%%
In Theorem \ref{theorem2} we derive the asymptotic rate of convergence of the variance of Gasser-M\"uller estimator of local variogram expressed in (\ref{eq:var-raw}).
\begin{theorem} 
\label{theorem2}
Assume the same conditions as in Theorem \ref{theorem1}. The variance of the local variogram estimator $\hat{\gamma}_{L,\lambda}(s, h)$ in (\ref{eq:variog-est}) is asymptotically in the order as follows:
{\allowdisplaybreaks 
\begin{align}
var( \hat{\gamma}_{\lambda}(s,h)) 
& =O\left(\frac{1}{n\lambda}\right) O \left( n^{-2q-\alpha} + n^{-2\alpha} \right).
\label{eq:var1} 
\end{align}}
\end{theorem}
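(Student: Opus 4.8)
The plan is to start from the exact expression (\ref{eq:var-raw}) and split the double sum into its diagonal part ($i=j$) and off--diagonal part ($i\neq j$), using the two moment identities established just above the theorem: on the diagonal $var(D_{i,h}^{2})=\delta_{i,h}^{2}g_{i,h}+\tfrac12 g_{i,h}^{2}$, and off the diagonal $cov(D_{i,h}^{2},D_{j,h}^{2})=\delta_{i,h}\delta_{j,h}P_{ij}+\tfrac12 P_{ij}^{2}$ (which is where the Gaussian assumption on $\{X_i\}$ enters). Throughout I would rely on three elementary facts about the $m$--th order Gasser--M\"uller weights (see \cite{Gasser:1985}): each weight $K_{\lambda,i+h/2}(s)$ is $O\big((n\lambda)^{-1}\big)$, only $O(n\lambda)$ of them are nonzero (those with $s_i$ within $\lambda$ of $s$, with boundary kernels used near the endpoints), and $\sum_i K_{\lambda,i+h/2}(s)=1+o(1)$; in particular $\sum_i K_{\lambda,i+h/2}(s)^{2}=O\big((n\lambda)^{-1}\big)$. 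I would also carry the uniform size bounds already used in the proof of Theorem \ref{theorem1}: $g_{i,h}=(\sigma_i-\sigma_{i+h})^{2}+2\sigma_i\sigma_{i+h}(1-\rho_h)=O(n^{-\alpha})$, since $1-\rho_h=O(n^{-\alpha})$ with $\alpha<2$, and $\delta_{i,h}^{2}=O(n^{-2q}+n^{-2})$. The diagonal part is then immediate: bounding $\delta_{i,h}^{2}g_{i,h}+\tfrac12 g_{i,h}^{2}$ by its maximum over $i$ and factoring it out of $\sum_i K_{\lambda,i+h/2}(s)^{2}$ gives a contribution of order $O\big((n\lambda)^{-1}\big)\,O\big(n^{-2q-\alpha}+n^{-2-\alpha}+n^{-2\alpha}\big)=O\big((n\lambda)^{-1}\big)\,O\big(n^{-2q-\alpha}+n^{-2\alpha}\big)$, using $n^{-2-\alpha}\le n^{-2\alpha}$ for $\alpha<2$. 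This already produces the rate claimed in (\ref{eq:var1}), so the remaining task is to show the off--diagonal part is no larger.

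For the off--diagonal part I would use $|K_{\lambda,i+h/2}(s)|=O\big((n\lambda)^{-1}\big)$ together with $\sum_i|K_{\lambda,i+h/2}(s)|=O(1)$ to bound the absolute value of the off--diagonal double sum by $O\big((n\lambda)^{-1}\big)\cdot\max_i\sum_{j\neq i}\big(|\delta_{i,h}\delta_{j,h}|\,|P_{ij}|+\tfrac12 P_{ij}^{2}\big)$, so that everything reduces to estimating the row sums $\sum_{j\neq i}|P_{ij}|$ and $\sum_{j\neq i}P_{ij}^{2}$, of which only the terms with $|i-j|=O(n\lambda)$ actually contribute.

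The key input for these row sums is a sharper description of $P_{ij}$ than the leading expansion (\ref{eq:Pij}): I would write $P_{ij}=\rho_{|i-j|}(\sigma_i-\sigma_{i+h})(\sigma_j-\sigma_{j+h})+(\rho_{|i-j|}-\rho_{|i-j-h|})\sigma_i\sigma_{j+h}+(\rho_{|i-j|}-\rho_{|i-j+h|})\sigma_{i+h}\sigma_j$. The first term is $O(h^{2}n^{-2})$ by the Lipschitz bound on $\sigma$, while for $|i-j|>h$ the last two terms combine, after expanding $\sigma$ about $s_i$, into the centred second difference $\rho_{|i-j|+h}-2\rho_{|i-j|}+\rho_{|i-j|-h}$, which by (\ref{eq:rho}) is $O\big(h^{2}n^{-\alpha}|i-j|^{\alpha-2}\big)$. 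Because $\alpha<2$ this decays in $|i-j|$, so together with $|\delta_{i,h}\delta_{j,h}|=O(n^{-2q}+n^{-2})$ and with the $O(1)$--sized near--diagonal block $|i-j|\le h$ (where $|P_{ij}|=O(n^{-\alpha})$ rather than $O(n^{-2})$) treated separately, the row sums come out of order $O\big(n^{-2q-\alpha}+n^{-2\alpha}\big)$; hence the off--diagonal contribution matches the diagonal one and (\ref{eq:var1}) follows.

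The step I expect to be the main obstacle is exactly this off--diagonal estimate. In the independent--error analysis of \cite{Brown:2007ij} the cross terms vanish once $|i-j|$ exceeds the differencing range, whereas here $cov(D_{i,h}^{2},D_{j,h}^{2})$ is nonzero for every $i\neq j$ and one must quantify its decay. The subtlety is that a single increment $\rho_{|i-j|}-\rho_{|i-j\pm h|}$ is only $O\big(h\,n^{-\alpha}|i-j|^{\alpha-1}\big)$, which is not summable over $|i-j|$, so the argument genuinely relies on retaining the symmetric combination of the two increments so that the extra cancellation present in the second difference of $\rho$ can be exploited. Additional bookkeeping is needed for the interplay between the effective window width $n\lambda$ and the exponents $q,\alpha$, and to check that the bandwidth--dependent remainder terms generated along the way are dominated by $n^{-2q-\alpha}+n^{-2\alpha}$.
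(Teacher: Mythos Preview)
Your overall architecture matches the paper's proof: split (\ref{eq:var-raw}) into a diagonal piece and an off-diagonal piece, bound the diagonal using $g_{i,h}=O(n^{-\alpha})$, $\delta_{i,h}^{2}=O(n^{-2q}+n^{-2})$, and $\sum_i K_{\lambda,i+h/2}^{2}(s)=O((n\lambda)^{-1})$, observe that this already produces (\ref{eq:var1}), and then argue that the off-diagonal contribution is no larger.

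The difference lies in how the off-diagonal is controlled. The paper does not track any decay of $P_{ij}$ in $|i-j|$: it simply invokes the Taylor expansion (\ref{eq:Pij}), which asserts $P_{ij}=O(n^{-2})$ uniformly for $i\neq j$, and then bounds the whole off-diagonal double sum crudely by $\big(\sum_i K_{\lambda,i+h/2}\big)\big(\sum_j K_{\lambda,j+h/2}\big)\cdot O(n^{-4})=O(n^{-4})$, which is absorbed into (\ref{eq:var1}). Your route---rewriting $P_{ij}$ so that the centred second difference $\rho_{|i-j|+h}-2\rho_{|i-j|}+\rho_{|i-j|-h}$ appears, isolating the near-diagonal block $|i-j|\le h$ where $|P_{ij}|=O(n^{-\alpha})$, and summing the resulting $O(h^{2}n^{-\alpha}|i-j|^{\alpha-2})$ decay over the kernel window---is substantially more work. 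What it buys you is transparency across the full range $0<\alpha<2$ under the stated correlation form (\ref{eq:rho}): the uniform $O(n^{-2})$ bound in (\ref{eq:Pij}) effectively relies on a quadratic term $u^{2}/\theta^{2}$ in the expansion of $\rho$, which the general form (\ref{eq:rho}) does not supply for arbitrary $\alpha$. So your argument is the more robust of the two, at the cost of the extra bookkeeping you anticipate in your final paragraph; the paper's version is shorter because it treats (\ref{eq:Pij}) as a black box and never needs the second-difference cancellation you identify.
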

\begin{proof} %\em{of Theorem~\ref{theorem2}} We rewrite the variance of  with
Use Taylor expansions about $s$ of $\delta_{i,h}$, $g_{i,h}$,  and $P_{ij}$  (details are in equations (\ref{eq:del-expan}) and (\ref{eq:g-expan}) in the Appendix and in (\ref{eq:Pij}) respectively), and obtain a Taylor expansion of the variance in (\ref{eq:var-raw}) about $s$ at fixed lag $h$. 
{\allowdisplaybreaks 
\begin{align}
\label{eq:var0}
var(\hat{\gamma}_{\lambda}(s,h)) 
=&\sum_{i=1}^{n-h}K_{\lambda,i+\frac{h}{2}}^{2}(s)\left(\delta_{i}^{2}g_{i}+\frac{g_{i}^{2}}{2}\right) +2\sum_{i> j=1 }^{n-h-1}K_{\lambda,i+\frac{h}{2}}(s)K_{\lambda,j+\frac{h}{2}}(s)\left(\delta_{i}\delta_{j}P_{ij}+\frac{P_{ij}^{2}}{2}\right) \nonumber \\
 = & 2 \sum_{i=1}^{n-h}K_{\lambda,i+\frac{h}{2}}^{2}(s) \left\{ \delta_i^{2} (1-\rho_h)O\left(1\right) 
  +   (1-\rho_h)^2O\left( 1\right)  \right\} \nonumber \\
  & + 2 \sum_{i>j=1}^{n-h-1}K_{\lambda,i+\frac{h}{2}}^{2}(s)K_{\lambda,j+\frac{h}{2}}^{2}(s) \left\{ \delta_i \delta_j O(n^{-2}) + O(n^{-4}) \right\} \nonumber \\ 
 = & 2(1-\rho_h)  \sum_{i=1}^{n-h}K_{\lambda,i+\frac{h}{2}}^{2}(s)\left\{O( n^{-2} + n^{-2q})  + (1-\rho_h)O(1) \right\} \nonumber \\ 
& + 2\sum_{i>j=1}^{n-h-1}K_{\lambda,i+\frac{h}{2}}(s)K_{\lambda,j+\frac{h}{2}}(s) O(n^{-4}) %\nonumber \\
%= & O(n^{-\alpha})O\left(\frac{1}{n\lambda} \right) O(n^{-2}+O(n^{-2q})) + O(n^{-2\alpha})O\left(\frac{1}{n\lambda} \right) + O(n^{-4}).
\end{align}}
%Note that $P_{ij}=O(n^{-2})$. 
%The order of  (\ref{eq:kernelO2}) and (\ref{eq:kernelO3}) in (\ref{eq:var0}), we have (\ref{eq:var1}).
Use the fact that  $K_{\lambda,i+\frac{h}{2}} = O\left(\frac{1}{n\lambda}\right)$ 
and $\sum K_{\lambda,i+\frac{h}{2}}^{2} = O\left(\frac{1}{n\lambda}\right)$ and reduce the last line (\ref{eq:var0}) to (\ref{eq:var1}).
\end{proof}

%%%%%%%%%%%%

The correlation between $D_{i,h}^2$ and $D_{j,h}^2$ is 
\begin{align*}
cor&(D_{i,h}^{2}, D_{j,h}^{2})\\
 =&\frac{cov(D_{i,h}^{2}, D_{j,h}^{2})}{\sqrt{var(D_{i,h}^{2}) var(D_{j,h}^{2})} } \\
 =& \frac{\delta_{i,h} \delta_{j,h} P_{ij} + \frac{1}{2}P_{ij}^2}{\sqrt{(\delta_{i,h}^{2}g_{i,h}+\frac{1}{2}g_{i,h}^{2})(\delta_{j,h}^{2}g_{j,h}+\frac{1}{2}g_{j,h}^{2})}} \\
 = & \frac{ \dfrac{h^4}{n^4}\left[ \dfrac{2\sigma_i^2}{\theta^2} \left\{ \dfrac{ \sigma_i^2}{\theta^2}   - \left(\sigma_i^{(1)}\right)^2  - \delta_{i,h} \delta_{j,h} \frac{n^2}{h^2}\right\} + \left\{\delta_{i,h} \delta_{j,h} \frac{n^2}{h^2} +  \dfrac{1}{2}\left( \sigma_i^{(1)}\right)^2 \right\}\left( \sigma_i^{(1)}\right)^2  + o\left(n^{-1} \right)\right] }
{\sqrt{\left(\delta_{i,h}^{2}g_{i,h}+\frac{1}{2}g_{i,h}^{2}\right)\left(\delta_{j,h}^{2}g_{j,h}+\frac{1}{2}g_{j,h}^{2}\right)} }\\
 = & \frac{O(n^{-4})}{O(n^{-2\alpha})} = O(n^{-2(2-\alpha)}).
\end{align*}
Note that the correlation between the squared pseudo-residuals $D_{i,h}^{2}$ and $D_{j,h}^{2}$ for $i \neq j$ converges to 0 with the infill asymptotic. Simple differencing not only removes the feature of a mean function but also drastically reduces correlation in the data the closer $\alpha$ is to 2. When $\delta_{i,h}=o\left(n^{-1}\right)$ is negligible for all $i = 1, \dots, n-h$, in other words $\mu(s)\in \Lambda_{q}$ where $q < 1$, the third line of equality for $cor(D_{i,h}^{2}, D_{j,h}^{2})$ is reduced to
\[ cor(D_{i,h}^{2}, D_{j,h}^{2})
%=\dfrac{P_{ij}^2}{g_i g_j} 
= \dfrac{h^4}{n^4} \dfrac{\left\{\dfrac{2\sigma_i^2}{\theta^2} - \left( \sigma_i^{(1)}\right)^2 \right\}^2 + o\left(n^{-1}\right)}{ g_{i,h} g_{j,h}}.
\]
Since $g_{i,h} = o\left(n^{-\alpha}\right)$, the rate of convergence for the correlation is again $O\left(n^{-2(2-\alpha)}\right)$ where $0< \alpha < 2$. %Note that the correlation between $D_{i,h}^2$ and $D_{i+1,h}^2$ becomes weaker than the lag 1 correlation of the process $\{X_{i}\}$. Therefore, the construction of pseudo-residuals with a smaller lag reduces variance of the local variogram estimator in (\ref{eq:var-raw}). 

%%%%%%%%%%%%%
%%%%%%%%%%%%%
\subsection{Asymptotic result}
\label{sec:Risk}
The point-wise risk of the local variogram estimator is the sum of the squared bias in (\ref{eq:bias0}) and the variance in (\ref{eq:var0}). The asymptotic point-wise risk can be derived from combining the results of Theorems \ref{theorem1}  and  \ref{theorem2}.% (\ref{eq:bias1}) (\ref{eq:var1})
\begin{theorem}
\label{theorem3}
Consider estimating the variance function of a one-dimensional nonstationary process with $n$ equally-spaced observations, whose data model follows (\ref{eq:data-model}) and (\ref{eq:rho}). We assume that $\mu(s)\in \Lambda_{q},\, q\geq 0$, $\sigma^{2}(s)\in \Lambda_{\beta},\,\beta\geq 2$ and that the bandwidth $\lambda=O\left(n^{-x} \right)$ where $0<x<1$. When the order of Gasser-M\"uller kernel $m$ is greater than both $q$ and $\beta$, the point-wise risk of the estimator of local variogram in (\ref{eq:variog-est}) and the asymptotic convergence rate of bandwidth are
%Let the bandwidth $\lambda\rightarrow0$, $n\lambda \rightarrow\infty$ as $n\rightarrow\infty$. 
\begin{equation}
\label{eq:risk1}
Risk(\hat{\gamma}_{\lambda}(s,h))  = 
\begin{cases} 
O\left(n^{-4q}\right) &  \mbox{ where } \quad \lambda   \asymp n^{-1-2\alpha+4q}\\
O\left( n^{-4}\right) & \mbox{ where } \quad   \lambda   \asymp n^{3-2\alpha}
\end{cases}
 \end{equation} 
given $ \alpha < 2q < \min(\alpha + \dfrac{1}{2}, 2)$ for the top case %(the lower bound comes with the restriction on the range of $x$. If we assume that $x>0$, then the condition for (\ref{eq:risk1}) and (\ref{eq:band1}) is $\alpha <2q<2$.) 
and $q\geq1$ and $\alpha > \dfrac{3}{2}$ for the bottom case. When the order of Gasser-M\"uller kernel $m$  is greater than only one of $q > 1$ or $\beta$, the point-wise risk is
\begin{equation}
\label{eq:risk2}
Risk(\hat{\gamma}_{\lambda}(s,h))  = 
\begin{cases}
 O\left(n^{-2m (1+2\alpha)/(1+2m)} \right) &\mbox{where } \quad \lambda   \asymp n^{-(1+2\alpha)/(1+2m)}\\ 
 O\left(n^{-2\alpha-2m/(1+2m)}  \right) & \mbox{where }\quad \lambda   \asymp  n^{-1/(1+2m)}
\end{cases} 
\end{equation}
given $\alpha < \min\left(2q, \dfrac{3}{2}\right)$ for the top and $\ \alpha <  2q$ for the bottom.
\end{theorem}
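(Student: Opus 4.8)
\section*{Proof proposal for Theorem \ref{theorem3}}

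The plan is to insert the bias expansion of Theorem \ref{theorem1} and the variance expansion of Theorem \ref{theorem2} into the pointwise risk decomposition
$Risk(\hat\gamma_\lambda(s,h)) = \mathrm{bias}^2(\hat\gamma_\lambda(s,h)) + var(\hat\gamma_\lambda(s,h))$,
treat $\lambda = O(n^{-x})$ with $0<x<1$ as the free parameter, and choose $x$ so that the variance is balanced against (or pushed just below) the squared bias; the stated bandwidth rates are exactly the balancing values, and the stated side conditions are exactly what make those values admissible. I would organise the argument by the two regimes in the statement.

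For (\ref{eq:risk1}) the kernel order $m$ exceeds both $q$ and $\beta$, so by the first line of (\ref{eq:bias1}) the bias is $O(n^{-2}+n^{-2q}+n^{-\alpha-1})$ and carries no $\lambda$; squaring gives an irreducible floor $O(n^{-4}+n^{-4q}+n^{-2\alpha-2})$. Theorem \ref{theorem2} gives $var = O\!\big((n\lambda)^{-1}\big)\,O(n^{-2q-\alpha}+n^{-2\alpha})$, which decreases in $\lambda$; under $\alpha<2q$ the dominant variance term is $(n\lambda)^{-1}n^{-2\alpha}$. The optimal $\lambda$ is then the smallest one for which the variance no longer dominates, i.e.\ the solution of $var\asymp\mathrm{bias}^2$. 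When $\alpha<2q<\min(\alpha+\tfrac12,2)$ one checks that $n^{-4q}$ is the largest of the three squared-bias terms (this needs $2q<\alpha+1$ and $q<1$), so setting $(n\lambda)^{-1}n^{-2\alpha}\asymp n^{-4q}$ gives $\lambda\asymp n^{-1-2\alpha+4q}$ and $Risk=O(n^{-4q})$; the pair of inequalities $\alpha<2q$ and $2q<\alpha+\tfrac12$ is precisely what forces the exponent $-1-2\alpha+4q$ into $(-1,0)$, so $\lambda$ is a legitimate bandwidth. When instead $q\ge1$ and $\alpha>\tfrac32$, the squared bias is $O(n^{-4})$ (here $q\ge1$ kills $n^{-4q}$ and $\alpha>1$ kills $n^{-2\alpha-2}$), the same balancing gives $\lambda\asymp n^{3-2\alpha}$, and $3-2\alpha\in(-1,0)$ exactly because $\tfrac32<\alpha<2$, yielding $Risk=O(n^{-4})$.

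For (\ref{eq:risk2}) the kernel order beats only one of $q>1$ and $\beta$, so (\ref{eq:bias1}) now contributes a $\lambda$-dependent bias term: $O(\lambda^m)$ when $m\le q$ and $O(n^{-\alpha}\lambda^m)$ when $q<m\le\beta$. Here I would run the standard bias--variance tradeoff, equating the squared $\lambda^m$-bias with the dominant variance term $(n\lambda)^{-1}n^{-2\alpha}$ (legitimate under $\alpha<2q$). In the first case $\lambda^{2m}\asymp(n\lambda)^{-1}n^{-2\alpha}$ gives $\lambda\asymp n^{-(1+2\alpha)/(1+2m)}$ and $Risk\asymp\lambda^{2m}=O\!\big(n^{-2m(1+2\alpha)/(1+2m)}\big)$; in the second, $n^{-2\alpha}\lambda^{2m}\asymp(n\lambda)^{-1}n^{-2\alpha}$ gives $\lambda\asymp n^{-1/(1+2m)}$ and $Risk=O\!\big(n^{-2\alpha-2m/(1+2m)}\big)$. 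One must still verify that at the chosen bandwidth the $\lambda^m$-term really does dominate the fixed pieces $n^{-2}$, $n^{-2q}$, $n^{-\alpha-1}$ of the bias; a short comparison of exponents, using $\alpha<\min(2q,\tfrac32)$ in the first display and $\alpha<2q$ in the second, confirms this and also gives $0<x<1$ (e.g.\ $x=(1+2\alpha)/(1+2m)<1$ since $\alpha<2\le\beta<m$).

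The hard part is not any individual estimate but the case bookkeeping: one must simultaneously track which of the three squared-bias terms and which of the two variance terms is largest, and these orderings flip across the regions $\alpha<1$ versus $\alpha\ge1$, $\alpha<2q$ versus $\alpha\ge2q$, $q<1$ versus $q\ge1$, and $2q<\alpha+\tfrac12$ versus $2q\ge\alpha+\tfrac12$. Pinning down the side conditions so that the claimed term genuinely dominates \emph{and} the resulting $\lambda=n^{-x}$ satisfies $0<x<1$ is the delicate step; once the dominant terms are identified, everything reduces to elementary algebra in these exponents.
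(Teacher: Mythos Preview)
Your proposal is correct and follows essentially the same route as the paper: decompose the risk as squared bias plus variance, insert the orders from Theorems \ref{theorem1} and \ref{theorem2}, and then in each regime identify the dominant bias and variance terms and equate them to read off the optimal $\lambda$ and the resulting risk rate. The paper organises the casework by the trichotomy $m>q,\beta$ / $q<m\le\beta$ / $m\le q$ (labelled A, B, C with numbered sub-cases), while you organise it by the two displays (\ref{eq:risk1}) and (\ref{eq:risk2}); the underlying exponent comparisons and admissibility checks for $x\in(0,1)$ are the same.
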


% 
%\begin{proof}[of Theorem~\ref{theorem3}]
\begin{proof} The asymptotic bias and variance are derived in (\ref{eq:bias1}) and (\ref{eq:var1}) respectively.
{\allowdisplaybreaks 
\begin{align}
\label{eq:a-risk}
Risk&\left(\hat{\gamma}_{\lambda}(s,h),\gamma(s,h)\right) 
 = bias(\hat{\gamma}_{\lambda}(s,h))^{2}+var\left(\hat{\gamma}_{\lambda}(s,h)\right) \nonumber \\
&= \begin{cases} 
O\left(n^{-4} + n^{-4q} \right)+ O\left(\dfrac{1}{n\lambda}\right) O\left( n^{-2\alpha} \right)  \\  \hspace{3.5in}   \mbox{ where } \quad  q, \beta < m \\
O\left(n^{-4} + n^{-4q}  +  n^{-2\alpha} \lambda^{2m} \right)+ O\left( \dfrac{1}{n\lambda} \right) O\left( n^{-2\alpha} + n^{-2q-\alpha}\right)\\ 
\hspace{3.5in}   \mbox{ where } \quad q < m \leq \beta,\\
O(n^{-4} + n^{-4q} + \lambda^{2 m}) + O\left(\dfrac{1}{n\lambda}\right) O\left( n^{-2\alpha} + n^{-2q-\alpha}\right) \\
\hspace{3.5in} \mbox{ where } \quad  m \leq q.
\end{cases} 
\end{align}}
%{\setlength\itemindent{25pt} 

\begin{enumerate}
\item[A.] Assume that $m >q$ and $m> \beta$.  

	\begin{enumerate}  
	\item[(i)] When $q\geq 1$, $\alpha < 2q$ holds true because $0<\alpha<2$, and (\ref{eq:var1}) reduces to $O\left(n^{-2\alpha-1} \lambda^{-1}\right)$. When the asymptotic bias is $O(n^{-2})$, the bandwidth condition is met and $\lambda \asymp n^{3-2\alpha}$, which suggests $\frac{3}{2}  < \alpha$.

	\item[(ii)] When $\frac{1}{2}<q<1$, %and $\alpha \geq 2q-1$, 
the asymptotic order of bias is $O\left(n^{-2q} \right)$. With  $\alpha < 2q$, the asymptotic variance of $O\left(n^{-2\alpha -1} \lambda^{-1}\right)$. Then, $\lambda \asymp n^{-1-2\alpha +4q}$.
\end{enumerate}

\item[B.] Assume $q < m \leq \beta$.
	\begin{itemize}
	\item When $\alpha < 2q$, the asymptotic variance is $O\left(n^{-2\alpha-1} \lambda^{-1}\right)$. 
	\item[(i)]   The bias is $O(n^{-\alpha}\lambda^{m})$ when $\alpha < 2q$, and it gives $\lambda   \asymp n^{-1/(1+2m)}$.
	\item[(ii)] The bias of $O(n^{-2})$ has similar conditions as A(i) where $q>1$, $\alpha > 2 - \dfrac{m}{1+2m}$, and $\lambda  \asymp n^{3-2\alpha}$.
	\item[(iii)]  The bias of $O\left( n^{-2q}\right)$ has similar conditions as A(ii) where $q\leq1$, $\alpha >\dfrac{1}{ 2q - \frac{m}{1+2m}}$, and  $\lambda  \asymp n^{-1-2\alpha +4q}$.  
	\item When $\alpha \geq 2q$, the asymptotic variance is $O\left(n^{-2q -\alpha-1} \lambda^{-1}\right)$. 
	\item[(iv)]   The bias is $O(n^{-\alpha}\lambda^{m})$ when $\alpha < 2q$, which is a contradiction to $\alpha \geq 2q$.
	\item[(v)] The bias  is $O(n^{-2})$ when $q > 1$, which is a contradiction to $0< \alpha < 2$ with $\alpha \geq 2q$.
	\item[(vi)]  The bias of $O\left( n^{-2q}\right)$ suggests $\lambda   \asymp n^{-(1+\alpha-2q)} = o(n^{-1})$, which is a contradiction to $\lambda = O(n^{-1})$.  
 %The condition for the kernel order $m$, i.e. $\dfrac{m}{1+2m}<2 - \alpha$. Solving for $m$, we have $m > \dfrac{\alpha - 2}{3-2\alpha} = \dfrac{1}{2} \left(\dfrac{1}{2\alpha -3} -1 \right)$, which is a contradiction to the condition:  $q < m \leq \beta$. 
	\end{itemize}

\item[C.] Assume $m \leq q$.
	\begin{itemize}
	%\item When $\alpha < 2q$, the asymptotic variance is $O\left(n^{-2\alpha-1} \lambda^{-1}\right)$. 
		\item[(i)]   Assuming that the bias is $O(\lambda^{m})$, we have $\lambda \asymp n^{-(1+2\alpha)/(1+2m)}$  when $\alpha < 2q$; and $\lambda   \asymp n^{-(1+\alpha+2q)/(1+2m)}$ when $\alpha \geq 2q$. Checking the assumptions requires:
		\begin{itemize}
		\item[-]  $O(\lambda^{m}) > O(n^{-2}) \Longleftrightarrow \dfrac{m(1+2\alpha)}{1+2m} < 2$. This implies $m < \dfrac{2}{2\alpha - 3}$ when $ \alpha > \dfrac{3}{2}$. %$m > \dfrac{2}{2\alpha - 3}$ when $ \alpha < \dfrac{3}{2}$ and
		\item[-]  $O(\lambda^{m}) > O(n^{-2q}) \Longleftrightarrow \dfrac{m(1+2\alpha)}{1+2m} < 2q$. This holds true for the given condition $m\leq q$ since $ \dfrac{1+2\alpha}{1+2m}< \dfrac{1+4q}{1+2m} < \dfrac{2q}{m}$ where the second inequality holds when $m < 2q$. 		
	\end{itemize}

	\item[(ii)] The bias of $O(n^{-2})$ has the same case as A(i) and B(ii), and $\lambda  \asymp n^{3-2\alpha}$.
	\item[(iii)]  By the second check in C(i), the bias cannot be $O\left( n^{-2q}\right)$.  
	%\item When $\alpha \geq 2q$, the asymptotic variance is $O\left(n^{-2q -\alpha-1} \lambda^{-1}\right)$. 
	%\item[(iv)]   If the bias is $O(\lambda^{m})$, then  	
	\item[(v)] The bias of $O(n^{-2})$ does not hold as in B(v).
	\end{itemize} 
\end{enumerate}
%\hfill \square
\end{proof}

\begin{remark}
In (\ref{eq:risk1}) of Theorem \ref{theorem3}, the order of risks and bandwidth are continuous at $q=1$. In (\ref{eq:risk2}) of Theorem \ref{theorem3}, the risks in $O\left(n^{-2m(1+2\alpha)/(1+2m)} \right)$ and $O\left(n^{-2\alpha - 2m/(1+2m)}\right)$ are quite similar. In fact $\dfrac{O\left(n^{-2m(1+2\alpha)/(1+2m)}\right)}{O\left( n^{-2\alpha - 2m/(1+2m)} \right)} = O\left(n^{2\alpha/(1+2m)}\right) = o(n^{1})$.  In both cases, the implied range of the smoothness parameter is $\alpha <3/2$.
\end{remark}
\begin{remark}
There is divergence of risk when (i) $q \geq \beta$ and/ or (ii) the process is very smooth with $\alpha \gtrsim 3/2$. In both cases a variance function is masked by a mean process. %\gtrsim
\end{remark}
Given $m = \beta$, as $\alpha \rightarrow 0$  ($\alpha = 0$ suggests an independent process), the risk converges to $O\left(n^{-2\beta/(1+2\beta)} \right)$ in all three cases. This result is consistent with the nonparametric estimation of $\beta$ differentiable functions. 
%With a larger $\beta$, a larger bandwidth is preferred. %If a mean function has a greater differentiability than a variance function, a smaller bandwidth is preferred. The rationale is that the estimation of local variance is confined to a neighborhood whose size depends on .

\section{Bandwidth Selection}
\label{sec:algorithm}
%%%%%%%%%%
%%%%%%%%%%
We are interested in estimating the variance function embedded in a nonstationary spatial process where $\beta > q$, i.e. the variance function has a greater differentiability than the mean function, and the standardized spatial process is isotropic. The estimation of a variogram at a fixed lag-$h$ is given as the average of squared differences at that lag, and the estimation of a local variogram at a fixed location and distance is a local average of the squared differences for the given location. The concept of ``local'' or nearby locations can be defined by a bandwidth of a smoothing kernel, and this section discusses the bandwidth selection procedure.

It is well known in nonparametric statistics literature that when the underlying data contain correlation in additive errors, cross-validation requires an adjustment in data or in the penalty term of objective functions. \cite{Altman:1990fk} proposed to adjust the weights of the correlated residuals. \cite{Han:2008tg} added  a penalty term to the likelihood function to adjust for the correlation and then simultaneously estimated the bandwidth and the correlation parameters. \cite{Opsomer:2001qe} compiled several proposals of bandwidth selection in nonparametric regression with correlated errors and addressed recent developments on the theoretical front. %\cite{Hart:1991fk} Two common approaches in bandwidth selection are cross-validation and a plug-in estimator. 
 We choose leave-one-out cross-validation to minimize the mean square prediction errors of local variogram.

%We discussed in \ref{sec:Bias} and \ref{sec:Risk} that using a high-order Gasser-M\"uller kernel reduces the bias and risk in infill or mixed-d..

Recall that $D_{i.h}^2$ denotes the $i$\textsuperscript{th} squared difference of lag-$h$ process. Let $d_{i,h}^2$ represent a realization of $D_{i,h}^2$, and define a raw deviance of local variogram  estimation at $s_{i}+h/2$ as 
\begin{equation}
\hat{\epsilon}_{i}=d_{i,h}^2-\hat{\gamma}_{L}\left(s_i + \frac{h}{2}, h\right). 
\label{eq:epsilon}
\end{equation}

Let the covariance matrix of the deviances be $C_\epsilon$ whose  $(i,j)$ element is $cov\left(\epsilon_{i}, \epsilon_{j}\right)$. We use it to de-correlate the raw deviances, $resid_{\epsilon} = (\epsilon_1,\cdots, \epsilon_{n})$, of the local variogram estimation and denote the de-correlated residuals as
\[\xi = C_{\epsilon}^{-1/2}\widehat{resid}_{\epsilon}.\]
The choice of a covariance model and the parameter values are not sensitive to bandwidth estimation because the correlation in $resid_{\epsilon}$ is weak. We assume an exponential covariance model with a range parameter $\phi = 0.01$ as follows. $cov\left(\epsilon_{i}, \epsilon_{j}\right) = \exp\left(-\frac{1}{\phi}\frac{|i-j|}{n}\right)$. The algorithm for bandwidth selection is below:

% where $M_{\left(i,j\right)}(\lambda) = K \left(\dfrac{s_{i+\frac{h}{2}}-s_{j+\frac{h}{2}}}{\lambda}\right)=K \left(\dfrac{i-j}{n\lambda}\right)$ . 

\begin{enumerate}
	\item Use a differencing lag $h=1$. Create a set of bandwidths $\left\{ \lambda_{j}\right\}$ whose value would not exceed 1/2 the range of the sample domain. Estimate the local variogram over a given domain using equation (\ref{eq:variog-est}) for each $\lambda_{j}.
	$\footnote{An important consideration in both local variogram or variance function estimation is to guarantee that they are non-negative or positive everywhere. When a bandwidth is small, the smoothing may result in negative values most often near the boundaries. In order to address this problem, one may fix the bandwidth size for locations near the boundary.}
	
	\item Calculate $\widehat{resid}_{\epsilon}$ in (\ref{eq:epsilon}) for each $\lambda_{j}$.

	\item Select a bandwidth using leave-one-out cross-validation, whose minimization of the overall mean-squared-error is approximated   as follows.  
\[\displaystyle{\hat{\lambda} \leftarrow \arg_\lambda \min \sum_{i=1}^{n-h} \left( \frac{\xi_i}{1-M_{(i,i)}} \right)}^2\]
Note $M$ is an $(n-h) \times (n-h)$ smoothing matrix of $D_{i,h}^2$ and and the ($i,\,i$) element of $M$ is $M_{\left(i,i\right)}=K(0)$.

	\item Use the bandwidth from 3 to estimate the local variogram and scale the de-trended original data by the square root of the estimation. Here, we assume $\big\{Z_{i} \big\}_{i=1}^{N}$ is the unaccounted, correlated variation in the process.
	\[\Big\{ Z_{i}^* \Big\}_{i=1}^{N} \leftarrow \dfrac{\big\{Z_{i} \big\}_{i=1}^{N} }{\Big\{ \sqrt{\widehat{\gamma}_{L, \hat{\lambda}} (s_i, h)}\Big\}_{i=1}^N }\]

	\item Select any correlation model for $\big\{ Z_{i}^* \big\}$ and estimate the correlation parameters $\theta$. Resize the local variogram by $\hat{\sigma}^2_{*}$ estimated from $\big\{ Z_{i}^* \big\}$.

	\item Plug-in the correlation model and the parameters to estimate variance at any $s$: 
	\[ \hat{\sigma}^{2}(s) \leftarrow \dfrac{ \widehat{\gamma}_{L, \hat{\lambda}}(s;h)\,\hat{\sigma}^2_{*}}{1-\hat{\rho}(h; \hat{\theta})}\]
\end{enumerate}

A simulation study result is in the next section, and it includes the result of bandwidth selection. Since bandwidth selection is necessary in local estimation of functions, it is important to use computationally inexpensive approaches. The generalized cross-validation we use is an approximation of cross-validation score that is not very heavy on computing in comparison to a leave-one-out cross-validation. In \cite{Anderes:2011nx}, the bandwidth selection adds greater computing complexity in addition to the functional estimation since the proposal is simulation-based and requires either inverting a large matrix or taking derivatives of an estimated function for every iteration of simulation. We also need a matrix inversion for de-correlation, but it is set up to perform once for the functional estimation. Also, there is no need to iterate  from 3-6, which could be a needed optimization process, because the estimation of correlation parameter is stable across a wide range choices in constructing $C_{\epsilon}$.

%In variance function estimation with $n=1000$ data points, the proposed method takes far less computing time than the likelihood-based method for two reasons: first, the difference-based method requires only one matrix inversion for a bandwidth selection while the likelihood-based method asks for 

.

%%%%%%%%%%%%%
%%%%%%%%%%%%
\section{Simulation Study}
\label{sec:simulation}

Here we compare the difference-based method and the likelihood-based method in terms of statistical and computing efficiencies. We also examine the effect of dependence in correlated errors on functional estimations. We define \textit{oracle bandwidth} as the bandwidth that yields the minimum discretely integrated mean square error ($DMSE$) of the estimated function. To provide equal footing on the difference- and likelihood-based estimations, we assume that the correlation functions and the parameter values are known. We label the oracle bandwidths under such set-up as `Diff-$\lambda^{O}$' and `Like-$\lambda^{O}$'. In applying the proposed method as described in Section \ref{sec:algorithm}, we select a bandwidth `Diff-$\lambda^*$' and estimate correlation parameters.  \cite{Levine:2006cr} also proposes a bandwidth selection method for heteroskedastic but independent errors, and we refer to them as `Levine-$\lambda$'.

\label{sec:simulation}
\subsection{Set-up}
\label{sec:simulation-setup}
Assume a data model $Z(s) = \mu(s)+\sigma(s)X(s)$ as in (\ref{eq:data-model}) and set $\mu(s)=0$ to test the method directly on the correlated error processes or under a constant mean assumption. We fix the stationary error process $\{X(s)\}$ as a Gaussian process for analytical tractability. They are easy to simulate, and the likelihood-based approach should prefer tractable likelihood functions and provide no disadvantage when compared to the difference-based estimation. The dependent structure is generated using an exponential correlation function with a range parameter set at three levels $\theta$=0.1,  0.01  and 0. The latter, in fact, refers to an independent error setting. The processes are generated on an equally spaced grid over a unit interval, $0\leq s \leq1$. Four sample sizes $n=100$,  200, 500, and 1000 are used. The standard deviation functions are chosen to examine the effect of differentiability of the mean versus the variance functions especially on the bandwidth selections. Here is the summary of experimental details. Note that \cite{Anderes:2011nx}  have used $\sigma(s)$ in 3(a), and we added 3(b).

\begin{enumerate}
\item $n$ = 100,  200, 500 and 1000

\item $\sigma(s):[0,1]\rightarrow\mathbf{R}^{+}$  and set $s\in\Big\{0,\frac{1}{n-1}, \dots, \frac{n-2}{n-1}, 1\Big\}.$

(a) an infinitely-differentiable function: $\sigma(s)=2\sin(s/0.15)+2.8,$

(b) a step function:  $\sigma(s)=1+\mathbbm{1}_{\{1/3<s\leq1\}}.$

\item For a stationary error processl $\left\{ X_{s}\right\}$, let $cor\big(X (s), X(s+h/n) \big) = \exp\left( -\frac{1}{\theta} \frac{h}{n} \right)$  and set $\theta=0.1$, 0.01 where $h$ is small and $0\leq s \leq1-\frac{h}{n}$. Also, set $cor\big(X (s), X(s') \big) = 0$ for any $0\leq s, s' \leq 1$.

\item Draw 100 random samples of each random procress. 
\end{enumerate}

We define $DMSE(\hat{\sigma}^{2}_{\hat{\lambda}})= \sum_{i=1}^{n}\{\hat{\sigma}_{i, \hat{\lambda}}-\sigma_{i}\}^{2}/n$ and $L_{\infty}(\hat{\sigma}^{2}_{\hat{\lambda}})$ $ = \max_{i}\left\{ |\hat{\sigma}_{i,\hat{\lambda}}^{2}-\sigma_{i}^{2}|\right\}$. We estimate the variance functions at 100 equally spaced locations on [0,1]. Then, we evaluate the estimated functions using discretely integrated mean square error ($DMSE$)  as an overall measure of functional estimation and the maximum absolute deviation, i.e. supremum norm $L_{\infty}$, to represent the worst estimation.

%\begin{figure}
%\begin{centering}
%\end{centering}
%\end{figure}

\begin{figure}
\begin{centering}
	\includegraphics[width=0.7\textwidth]{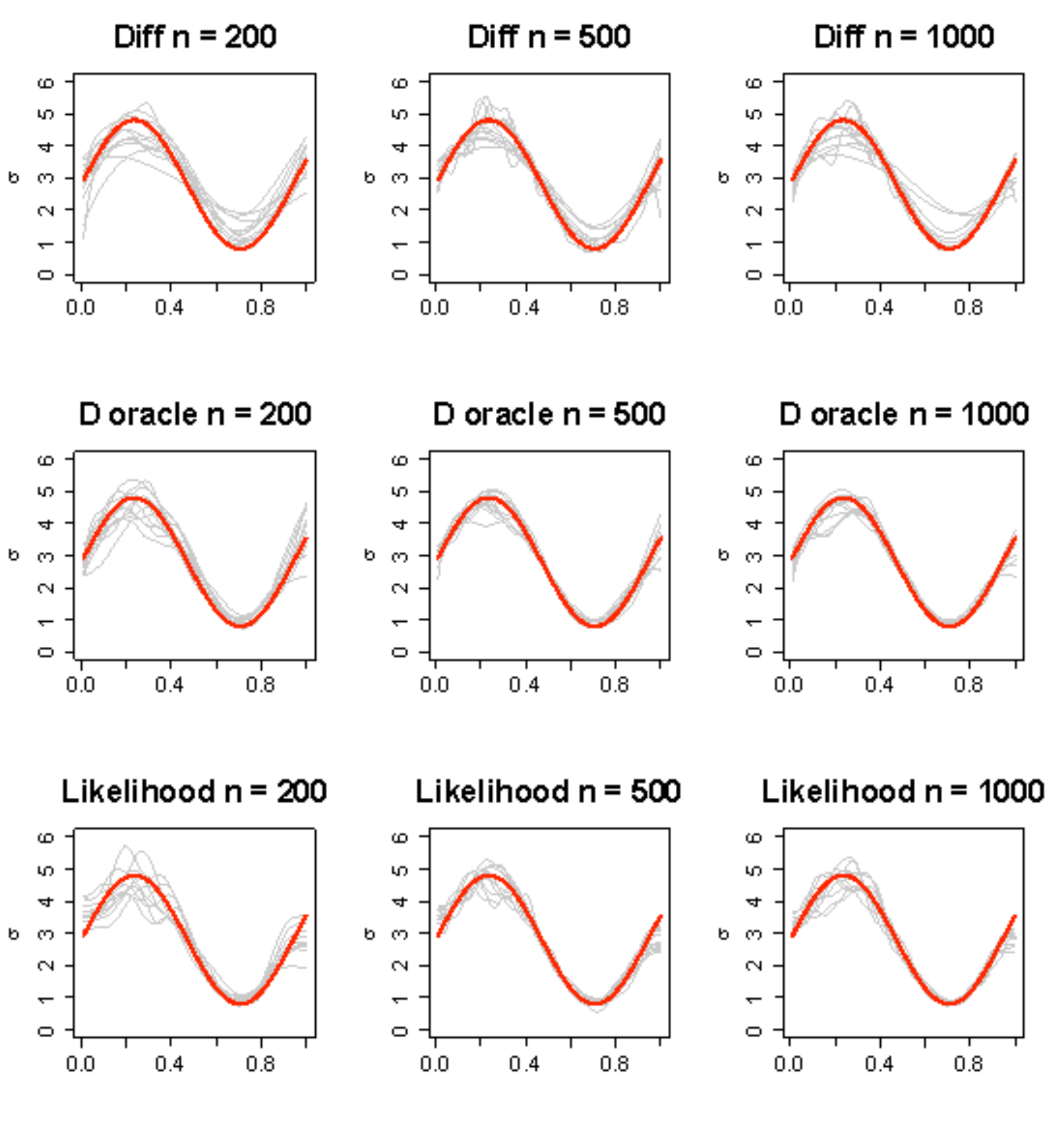}
        \caption{\label{fig:sine} The true standard deviation function is in thick red line. The estimation results are in thin gray lines for three estimation methods crossed with three levels of sample size.}
\end{centering}
\end{figure}

\subsection{Results}

Figure \ref{fig:sine} shows a few variance function estimation results where the true  $\sigma(s)$ is sinusoidal. The thick red line represents the true function and the thin gray lines are estimation results of several simulations. The first row implemented our proposed method; the second row applied the same idea with oracle bandwidths and known covariance parameters; and the third used \cite{Anderes:2011nx}'s likelihood-based method with oracle bandwidths and covariance parameters.  As expected in an infill design, the estimation becomes more precise as the number of points $n$ increases. The likelihood-based functional estimation shows more undulation than the difference-based estimation results. In other words, the oracle bandwidths for the likelihood-based method are underestimated. They were selected to minimize the discretized mean square error ($DMSE$), and therefore smoothing out the undulation should result in larger discretely integrated mean square errors than from the current functional estimations. %as the estimated function should have had much less undulation. 

\begin{figure}
\begin{centering}
	\includegraphics[width=0.75\textwidth]{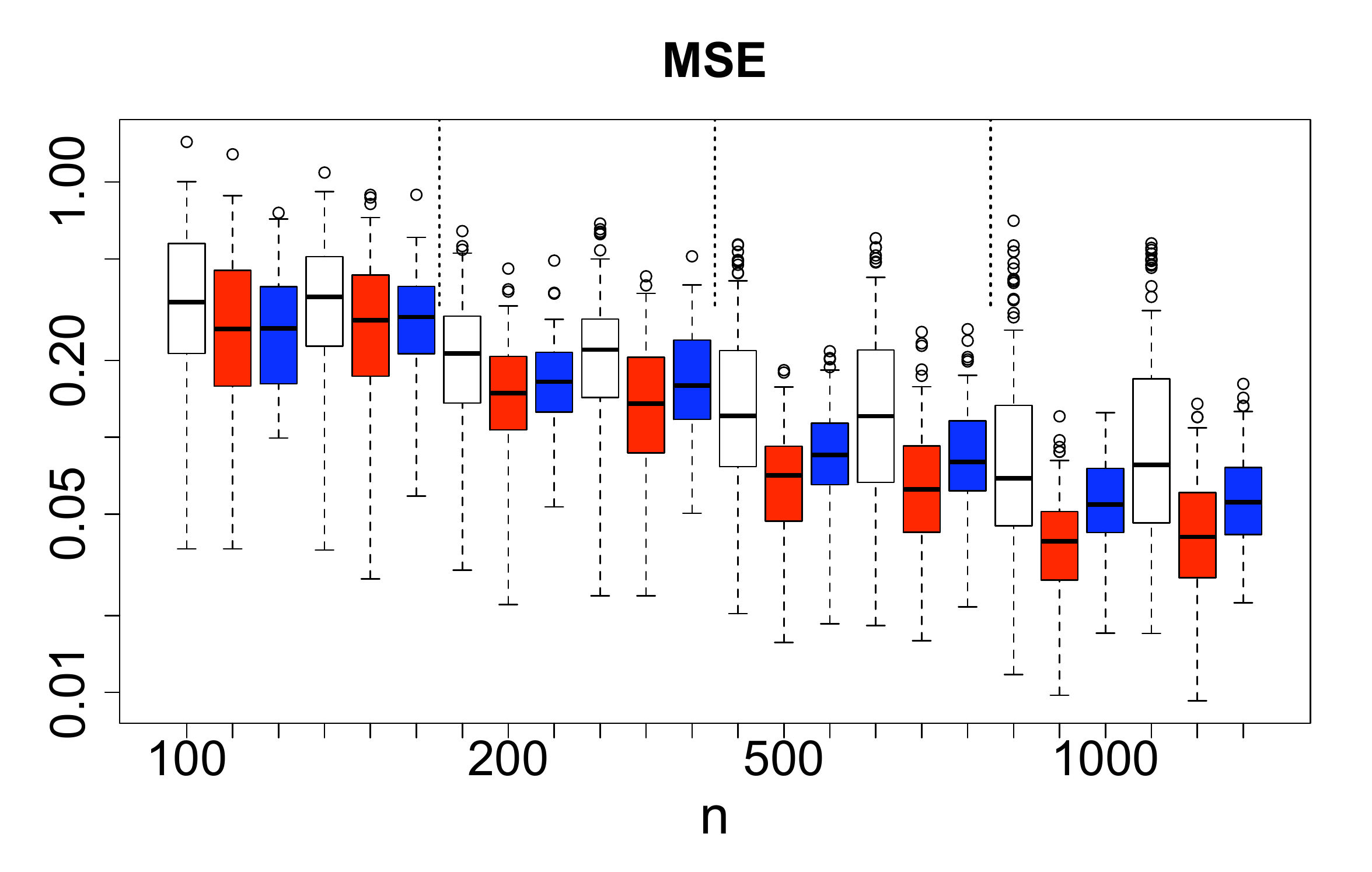}
        \caption{\label{fig:DMSE} Comparing  the proposed method (white boxplots), difference-based method with oracle bandwidths (red), and the likelihood-based method with oracle bandwidths (blue) for the estimation of sinusoidal $\sigma(\cdot)$ using $DMSE$.}
        	\includegraphics[width=0.75\textwidth]{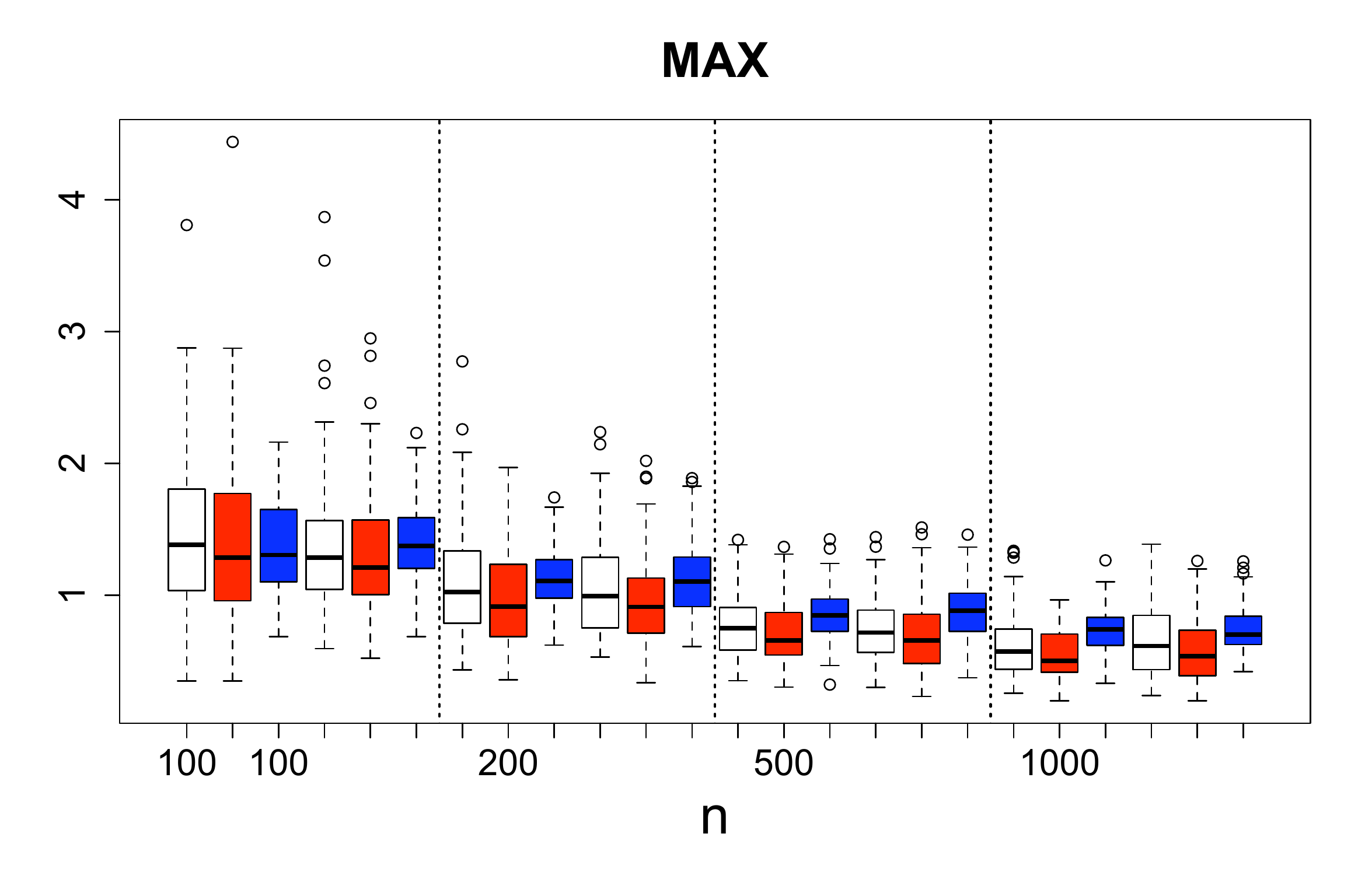}
        \caption{\label{fig:MAX} Comparing the results of functional estimation using $MAX$, the $L_{\infty}$ norm. }        
\end{centering}
\end{figure}

Figures \ref{fig:DMSE} and \ref{fig:MAX} summarize the simulation results of the sinusoidal $\sigma(\cdot)$ with discretely integrated mean square error ($DMSE$) and the maximum absolute deviation ($MAX$) respectively. The color of the boxplots represents the estimation method where the proposed method is in white, the proposed with $\lambda^{O}$ in red, and the likelihood-based method with $\lambda^{O}$ in blue. There are two sets of triad-colored boxplots for each sample size where the first set indicates weakly correlated processes (with $\theta=0.01$) and the second set strongly correlated processes (with $\theta=0.1$). The vertical dashed lines demarcate the change in sample sizes from 100, 200, 500, to 1000.  

There is little difference between \textit{Diff} and \textit{Likelihood}-based methods when the oracle bandwidths are plugged in and where $n$ is less than 200. As $n$ grows, \textit{Diff} has smaller $DMSE$ and $MAX$ than \textit{Likelihood}.  The differences are attributable  to the  under-smoothed functional estimations and to the boundary effects as the likelihood-based functional estimations are flat near the boundaries (as shown in the third row of Figure \ref{fig:sine}). Note that the summary measures show a reasonable range of values, considering that the functional value of $\sigma(\cdot)$ ranges from  0.8 - 4.8; the $DMSE$s are mostly less than 0.5 and the $MAX$s are generally less than 1.5. The comparison of two methods via Figures \ref{fig:DMSE} and \ref{fig:MAX} shows that \textit{Diff} is a simpler method with lower risk in estimation than \textit{Likelihood}.

%The smoother processes  does not affect the asymptotic risk but have smaller effective sample sizes. . . oracle bandwidths are smaller for smoother processes. 
 
%when we take the covariance parameters to be known since the covariance parameter does not affect the order of risk, (\ref{eq:a-risk})
% In \textit{Diff Levine} case, the independent error process variance function estimation performs worse than the \textit{diff} and \textit{like} methods due to selecting larger bandwidths than the ``oracle bandwidths'', and when correlation is ignored where $\theta = 0.1$ and 0.01, the estimation results are far worse.

%When using difference-based method on the independent error data, $DMSE$ and $MAX$ are larger than those on correlated error process data because differencing independent errors introduce greater correlation within the process and the estimation risk becomes larger. 

\linespread{1}
\begin{table}
\def~{\hphantom{0}}
\caption{Bandwidth selection summary for (a) sine and (b) step $\sigma(\cdot)$ function estimation. Oracle bandwidths, $\lambda^{O}$, achieve the minimum DMSE for both difference- and likelihood-based methods, our proposed bandwidth selections are represented as $\lambda*$, and there is Levine's method.}{
\begin{tabular}{c | c | ccc | ccc}
& Bandwidth & \multicolumn{3}{c}{(a) Sine} & \multicolumn{3}{c}{(b) Step} \\ \hline
$n$ & Methods & $\theta=0.1$ & $\theta=0.01$	& indep.	& $\theta=0.1$ & $\theta=0.01$ & indep. \\ \hline
100 &  Diff-$\lambda^{O}$	 & 0.203 	&0.206 	&0.209 		& 0.218	& 0.222	& 0.229\\
	&&	\scriptsize(.054)	& \scriptsize(.059)	& \scriptsize(.052)	&	\scriptsize(.071)	& \scriptsize(.084)	& \scriptsize(.076)\\
	& Diff--$\lambda*$ & 0.262 &	0.281 &	0.266 	& 0.405	& 0.415	& 0.434\\
	&&	\scriptsize(.074)	& \scriptsize(.079)	& \scriptsize(.069)	&	(.126)	& \scriptsize(.087)	& \scriptsize(.074)\\
	& Levine	& 0.356	& 0.455	& 0.420		& 0.360	& 0.467	& 0.418\\
	&&	\scriptsize(.297)	& \scriptsize(.274)	& \scriptsize(.281)	&	\scriptsize(.304)	& \scriptsize(.267)	& \scriptsize(.289)\\
	& Like--$\lambda^{O}$	& 0.165 	& 0.168 	& 0.154 		& 0.137	& 0.138	& 0.133\\
	&&	\scriptsize(.054)	& \scriptsize(.055)	& \scriptsize(.033)	&	\scriptsize(.032)	& \scriptsize(.030)	& \scriptsize(.030)\\ \hline
200	& Diff-$\lambda^{O}$	& 0.170 	& 0.171 	& 0.177 		& 0.191	& 0.185	& 0.203\\
	&&	\scriptsize(.034)	& \scriptsize(.037)	& \scriptsize(.046)	& \scriptsize(.050)	& \scriptsize(.060)	& \scriptsize(.066)\\
	& Diff--$\lambda*$ &	0.240 &	0.218  &	0.190 	&	0.381 &	0.336 &	0.289\\
	&&	\scriptsize(.090)	& \scriptsize(.108)	& \scriptsize(.119)	&	\scriptsize(.126)	& \scriptsize(.143)	& \scriptsize(.163)\\
	&  Levine	& 0.234	& 0.380	& 0.347		& 0.248	& 0.369	& 0.334\\
	&&	\scriptsize(.248)	& \scriptsize(.224)	& \scriptsize(.229)	&	\scriptsize(.249)	& \scriptsize(.230)	& \scriptsize(.217)\\
	& Like--$\lambda^{O}$	& 0.131 	& 0.129 	& 0.127 		& 0.113	& 0.113	& 0.112\\
	&&	\scriptsize(.034)	& \scriptsize(.028)	& \scriptsize(.021)	&	\scriptsize(.025)	& \scriptsize(.024)	& \scriptsize(.023)\\ \hline
500	& Diff-$\lambda^{O}$	& 0.140 	& 0.141 	& 0.154 		& 0.154	& 0.152	& 0.158\\
	&&	\scriptsize(.027)	& \scriptsize(.031)	& \scriptsize(.037)	& 	\scriptsize(.042)	& \scriptsize(.042)	& \scriptsize(.047)\\
	& Diff--$\lambda*$h &	0.217 &	0.205 &	0.180 	&	0.357 &	0.329 &	0.260\\
	&&	\scriptsize(.107)	& \scriptsize(.117)	& \scriptsize(.111)	&	\scriptsize(.143)	& \scriptsize(.147)	& \scriptsize(.159)\\
	&  Levine	& 0.186	& 0.256	& 0.232		& 0.192	& 0.264	& 0.240\\
	&&	\scriptsize(.186)	& \scriptsize(.164)	& \scriptsize(.165)	&	\scriptsize(.193)	& \scriptsize(.152)	& \scriptsize(.166)\\
	& Like--$\lambda^{O}$	& 0.098 	& 0.098	& 0.100		& 0.091	& 0.090	& 0.094\\
	&&	\scriptsize(.016)	 & \scriptsize(.016)	& \scriptsize(.016)	&	\scriptsize(.019)	& \scriptsize(.016)	& \scriptsize(.017)\\ \hline
1000	 & Diff-$\lambda^{O}$	& 0.120 	& 0.121 	& 0.133 		& 0.131	& 0.125	& 0.148\\
	&&	\scriptsize(.026)	& \scriptsize(.026)	& \scriptsize(.023)	&	\scriptsize(.033)	& \scriptsize(.033)	& \scriptsize(.038)\\
	& Diff--$\lambda*$ &	0.209 &	0.186 &	0.170 	&	0.329 &	0.300 &	0.255\\
	&&	\scriptsize(.121)	& \scriptsize(.117)	& \scriptsize(.109)	&	\scriptsize(.159)	& \scriptsize(.157)	& \scriptsize(.165)\\
	&  Levine	& 0.180	& 0.289	& 0.174		& 0.199	& 0.288	& 0.191\\
	&&	\scriptsize(.155)	& \scriptsize(.118)	& \scriptsize(.094)	&	\scriptsize(.157)	& \scriptsize(.123)	& \scriptsize(.092)\\
	& Like--$\lambda^{O}$	& 0.086 	& 0.084 	& 0.086 		& 0.078	& 0.076	& 0.078\\
	&&	\scriptsize(.013)	& \scriptsize(.011)	& \scriptsize(.013)	&	\scriptsize(.015)	& \scriptsize(.013)	& \scriptsize(.014)\\ \hline
\end{tabular}}
\label{tb:band}
\end{table}

%Comparing the results in Table \ref{tb:sine} and \ref{tb:quad}, we note that the differentiability of the variance function affects the estimation $DMSE$ and $L_{\infty}$.   and \ref{tb:quad}
 
 As noted in (\ref{eq:risk1}) of Theorem \ref{theorem3}, the greater differentiability a variance function has, the quicker the risk converges. To confirm the theoretical results of asymptotic risk,  the simulation study involved four variance functions changing the differentiability, but we omit full summaries. The strength of the dependency in the process, that is, the correlation at a fixed distance, also does not affect the asymptotic risk. When the true values ($\theta = 0.01$ and 0.1)  of the range parameter were plugged in for the \textit{Diff} (the red) and \textit{Likelihood} (the blue boxplots) showed very similar estimation results in both $DMSE$ and $MAX$. In practice, the covariance parameter estimation brings uncertainty to the estimation as we see a wider range of $DMSE$ and $MAX$ in the white boxplots.

Table \ref{tb:band} contains the summary of selected bandwidth of sinusoidal and step $\sigma(\cdot)$ function. We used a degree 6 Gasser-M\"uller kernel for the differenced-based method and a Gaussian-based higher order kernel for the likelihood-based method. We see that the size of the ``oracle bandwidth" is slightly smaller  for a dependent process than an independent process when $n=500$ and 1000. Our bandwidth selection gives the opposite result in that the independent error process gets the smallest average. Since the range of selected bandwidths is wide, there are under-smoothed and over-smoothed functions in the top row of Figure \ref{fig:sine} especially where $n$ is large.

We have shown through a simulation study that difference-based estimation has a smaller  $DMSE$ than a likelihood-based approach. In nonparametric regression, boundary bias can be easily fixed by adjusting the objective function near the boundary, whereas for the likelihood-based method generalized estimating equations are suggested. Another contrast between the two approaches is in computing time. A difference-based method needs no matrix inversion and reduces the computing time by $O(n^{-2})$  to that of a likelihood-based method, where $n$ is the length of the data process. The bandwidth selection idea by \cite{Anderes:2011nx} also requires a global covariance matrix inversion and increases the computing time by $O(mn^{2})$  where $m$ is the number of simulations for generating a globally stationary process to test against the observed nonstationary process. While their bandwidth selection ideas are insightful and useful when there is a specific data model that can be simulated, it is much more costly to perform likelihood-based estimation in terms of computing time and power.

\section{Summary and Future Work}
\label{sec:conclusion}
We have developed a nonparametric variance function estimator for a one-dimensional nonstationary process whose stationary correlation structure is isotropic. Under certain regularity conditions we can directly estimate a variance function, applying a difference filter to the data. We assume that the error processes are additive. The mean can be estimated and then removed from the data to employ our method and estimate the variance function. A direct application of the method to compute pseudo-residuals is possible assuming a smoothly varying mean function, and this should reduce the bias caused from estimating the mean function. We have investigated infill asymptotic  properties of the local variogram estimator and have shown that the asymptotic rate of convergence is dependent on the relative smoothness of mean function to the smoothness of variance function and the mean square differentiability of the data process.

%Mixed-domain asymptotic can be  derived(?)

%Depending on a modeler, the decomposition of a nonstationary process into a large-scale trend (mean function), correlated error, and pure independent error would be different. 
We would like to extend the difference-based method to a two-dimensional random field nonstationary variance function estimation. In such setting the number of difference filter choices increases in shape and size, and the dependence would be stronger among the filtered process leading to new challenges for estimation.

 %Here, one needs to select a filter and a passband, then pass signals through the filter to reduce the effect of signals outside the preferred range of frequencies, and lastly convert those filtered spectra of noise to variance estimation. This method could estimate the same variance function as the difference-based method we described.  Some disadvantages to this approach. %, and therefore, limits the range of values for estimation. Also, There are several points for consideration whose number exceeds those of the time-domain difference-based method.  

\bigskip
\begin{center}
{\large\bf SUPPLEMENTARY MATERIAL}
\end{center}

%\section*{Appendix 1: Technical details}

Here is the detailed expansion of the variance of $h$-lagged nonstationary process with smooth mean and variance function. This details (\ref{eq:deriving-locvariog}) in deriving the local variogram (\ref{eq:Def-locvariog}) as the main term of the expansion.
{\allowdisplaybreaks 
\begin{align}
\label{eq:variogram-full}
 var&\left(Z\left(s-\frac{h}{2n}\right)-Z\left(s+\frac{h}{2n}\right)\right) \nonumber \\
 = & 2\left(\sigma^{2}(s)+\frac{\sigma^{2(2)}(s)}{2!}\left(\frac{h}{2n}\right)^{2}+\frac{\sigma^{2(4)}(s)}{4!}\left(\frac{h}{2n}\right)^{4}+o\left(\left(\frac{h}{2n}\right){}^{5}\right)\right) \nonumber \\
& -2\rho_{h}\sum_{k=0}^{p}\left\{ \left(\frac{\sigma^{(k)}(s)}{k!}\right)^{2}\left(\frac{h}{2n}\right)^{2k}(-1)^{k}+2\sum_{i+j=2k,\, i\neq j}\frac{\sigma^{(i)}(s)}{i!}\frac{\sigma^{(j)}(s)}{j!}\left(\frac{h}{2n}\right)^{2k}\right\}  \nonumber \\
 = & 2(1-\rho_{h})\left\{ \sigma^{2}(s)+\frac{\sigma^{2(2)}(s)}{2!}\left(\frac{h}{2n}\right)^{2}+\frac{\sigma^{2(4)}(s)}{4!}\left(\frac{h}{2n}\right)^{4}+o\left(\left(\frac{h}{2n}\right){}^{5}\right)\right\} \nonumber \\
& +\rho_{h} \left[\frac{\left(\sigma^{2(1)}(s)\right)^{2}}{\sigma^{2}(s)}\left(\frac{h}{2n}\right)^{2} \right. \nonumber \\
& \quad + \left.\left\{ \frac{\left(\sigma^{2(1)}(s)\right)^{4}}{32\left(\sigma^{2}(s)\right)^{3}}+\frac{\left(\sigma^{2(1)}(s)\right)^{2}}{8\left(\sigma^{2}(s)\right)^{2}}-\frac{3\left(\sigma^{2(2)}(s)\right)^{2}}{8\sigma^{2}(s)}+\frac{\sigma^{2(1)}(s)\sigma^{2(3)}(s)}{6\sigma^{2}(s)}\right\} \left(\frac{h}{2n}\right)^{4}\right] \nonumber \\
 = & 2\sigma^{2}(s)(1-\rho_{h})+\left\{ \sigma^{2(2)}(s)(1-\rho_{h})+\frac{\left(\sigma^{2(1)}(s)\right)^{2}}{\sigma^{2}(s)}\rho_{h}\right\} \left(\frac{h}{2n}\right)^{2}+o\left(\left(\frac{h}{2n}\right)^{3}\right). 
\end{align}}

%Here is a sketch of the proof. See for details. By plugging in (\ref{eq:del-expan}), (\ref{eq:g-expan}), 

$\delta_{i,h} \leq c_{\mu}^{2}(h/n)^{q}$.
Under the condition that $\mu(\cdot)\in \Lambda_{q}(c_f)$ and $q\geq0$, the Taylor expansion of $\delta_{i,h}$ about location $s$ when $q\geq 1$ is:
{\allowdisplaybreaks 
\begin{align}
\label{eq:del-expan}
\delta_{i,h} & = \sum_{j=1}^{\lfloor q \rfloor}\frac{\mu^{(j)}_{s}}{j!} \left\{ (s_{i}-s)^{j}- (s_{i+h}-s)^{j} \right\}+ O\left(|s_{i}-s|^{q}+|s_{i+h}-s|^{q} \right) \nonumber \\
& = -\frac{h}{n}\sum_{j=1}^{\lfloor q \rfloor}\frac{\mu^{(j)}_{s}}{j!} \sum_{a=0}^{j-1}(s_{i}-s)^{a} (s_{i+h}-s)^{j-1-a} + O(|s_{i}-s|^{q} +|s_{i+h}-s|^{q}); 
\end{align}
and when $0 \leq q < 1$, it is:
\begin{equation}
\label{eq:del-expanq}
\delta_{i,h} = c\left(\frac{i}{n} \right)^{q} - c\left(\frac{i+h}{n} \right)^{q} = O\left(n^{-q}\right).
\end{equation}
}

A Taylor expansion of $g_{i,h}$ about location $s$ is:
{\allowdisplaybreaks 
\begin{align}
\label{eq:g-expan}
\frac{1}{2}g_{i,h}  = & (1-\rho_{h})\left[\sigma_{s}^{2}+\sigma_{s}\sum_{j=1}^{\lfloor\beta\rfloor}\frac{\sigma^{(j)}_{s}}{j!} \left\{\left(s_{i}-s\right)^{j}+\left(s_{i+h}-s\right)^{j}\right\} \right] +O(|s_{i}-s|^{\beta})\nonumber \\
& + \sum_{l=1}^{\lfloor \beta/2 \rfloor}\left( \frac{\sigma^{(l)}_{s}}{l!} \right)^{2}\left\{\left(s_{i}-s\right)^{2l}+\left(s_{i+h}-s\right)^{2l}-\rho_{h} \left(s_{i}-s\right)^{l}\left(s_{i+h}-s\right)^{l}\right\} \nonumber \\ 
& +  \sum_{m=3}^{\lfloor\beta\rfloor} \sum_{j=1}^{m-1} \left[\frac{c_{m}\sigma^{(j)}_{s}\sigma^{(m-k)}_{s}}{m!} \left\{\left(s_{i}-s\right)^{m}+\left(s_{i+h}-s\right)^{m}\right\} -\rho_{h} \frac{\sigma^{(j)}_{s}\sigma^{(m-j)}_{s}}{j! (m-j)!} \left(s_{i}-s\right)^{j}\left(s_{i+h}-s\right)^{m-j}\right]  
% = & 2(1-\rho_{h})\left[\sigma_{s}^{2}+\sigma_{s}\sum_{j=1}^{\lfloor\beta\rfloor}\frac{\sigma^{(j)}_{s}}{j!} \left\{\left(s_{i}-s\right)^{j}+\left(s_{i+h}-s\right)^{j}\right\}+ \sum_{j=1}^{\lfloor\beta/2\rfloor}\left(\frac{\sigma^{(j)}_{s}}{j!}\right)^{2}\left(s_{i}-s\right)^{j}\left(s_{i+h}-s\right)^{j} \right] \nonumber \\
 %& +  \sum_{j=1}^{\lfloor\beta/2\rfloor}\left(\frac{\sigma^{(j)}_{s}}{j!}\right)^{2}\left\{ \left(s_{i}-s\right)^{j}-\left(s_{i+h}-s\right)^{j} \right\}^{2} \nonumber \\
\end{align}}
under the condition that $\sigma^{2}(\cdot)\in \Lambda_{\beta}^{+}$ and $\beta \geq 2$.
Note that 
\begin{align}
\sum_{i=1}^{n-h}K_{\lambda,i}(s)&  = \sum_{i=1}^{n-h}\int_{(s_i+s_{i-1})/2}^{(s_i+s_{i+1})/2}\frac{1}{\lambda}K_{(B)}\left(\frac{s-u}{\lambda} \right)du \\ 
& = O(n\lambda)O\left(\frac{1}{n\lambda}\right)=O(1).
\label{eq:kernelO2}
\end{align}
Then
\begin{equation}
\label{eq:kernelO3} 
\sum_{i=1}^{n-h}K_{\lambda,i+\frac{h}{2}}^{2}(s)=O\left(\frac{1}{n\lambda}\right).
\end{equation}

\bibliographystyle{apalike}
\bibliography{thesis0}

%\bibliography{/Documents/Writing/Submissions/technometrics2015/gnslatex2014/thesis0}

\end{document}